\newcommand{\me}{\mathrm{e}}
\newcommand{\nn}{\nonumber}
\newcommand{\mm}{m}
\newtheorem{thm}{Theorem}
\newtheorem{cor}{Corollary}
\newtheorem{fact}{Fact}
\newtheorem{prop}{Proposition}
\newtheorem{rem}{Remark}
\begin{document}
\sloppy
\title{On the Number of Bins in Equilibria for Signaling Games
}
\author{Serkan~Sar{\i}ta\c{s}$^1$ \and Philippe Furrer$^2$ \and Sinan~Gezici$^3$ \and Tam\'{a}s Linder$^4$ \and Serdar~Y\"uksel$^4$}
\date{%
	$^1$Division of Decision and Control Systems, KTH Royal Institute of Technology\\ SE-10044, Stockholm, Sweden. Email: saritas@kth.se.\\%
	$^2$Oliver Wyman Inc. \\  M5J 0A1, Toronto, Ontario, Canada. Email:phil.furrer@oliverwyman.com\\
	$^3$Department of Electrical and Electronics Engineering, Bilkent University\\ 06800, Ankara, Turkey. Email: gezici@ee.bilkent.edu.tr.\\
	$^4$Department of Mathematics and Statistics, Queen's University\\ K7L 3N6, Kingston, Ontario, Canada.  Email:  \{linder, yuksel\}@mast.queensu.ca.\\[2ex]%
}
\maketitle

\begin{abstract}
We investigate the equilibrium behavior for the decentralized quadratic cheap talk problem in which an encoder and a decoder, viewed as two decision makers, have misaligned objective functions. In prior work, we have shown that the number of bins under any equilibrium has to be at most countable, generalizing a classical result due to Crawford and Sobel who considered sources with density supported on $[0,1]$. In this paper, we refine this result in the context of exponential and Gaussian sources. For exponential sources, a relation between the upper bound on the number of bins and the misalignment in the objective functions is derived, the equilibrium costs are compared, and it is shown that there also exist equilibria with infinitely many bins under certain parametric assumptions. For Gaussian sources, it is shown that there exist equilibria with infinitely many bins.
\end{abstract}

\section{Introduction}

Signaling games and cheap talk are concerned with a class of Bayesian games where a privately informed player (encoder or sender) transmits information (signal) to another player (decoder or receiver), who knows the probability distribution of the possible realizations of the private information of the encoder. In these games/problems, the objective functions of the players are not aligned, unlike in classical communication problems. The cheap talk problem was introduced in the economics literature by Crawford and Sobel \cite{SignalingGames}, who obtained the striking result that under some technical conditions on the cost functions, the cheap talk problem only admits equilibria that involve quantized encoding policies. This is in significant contrast to the usual communication/information theoretic case where the objective functions are aligned. Therefore, as indicated in \cite{SignalingGames}, the similarity of players' interests (objective functions) indicates how much and to which extent the information should be revealed by the encoder; namely, the message about the private information should be strategically designed and transmitted by the encoder. In this paper, we discuss extensions and generalizations of strategic information transmission and cheap talk by focusing on exponential and Gaussian distributions (rather than sources with a density supported on $[0,1]$ as studied in \cite{SignalingGames}), and characterize the equilibrium solutions and properties for these kind of sources.

\subsection{Problem Definition}

The focus of this paper is to address the following problems:

\subsubsection{Number of Bins}
In our previous work \cite{tacWorkArxiv}, we showed that, since the distances between the optimal decoder actions are lower bounded by \cite[Theorem 3.2]{tacWorkArxiv}, the quantized nature of an equilibrium holds for arbitrary scalar sources, rather than only for sources with a density supported on $[0,1]$ as studied in the seminal paper by Crawford and Sobel \cite{SignalingGames}. Hence, for bounded sources, it can easily be deduced that the number of bins at the equilibrium must be bounded. For example, for a uniform source on $[0,1]$ and quadratic objective functions, \cite{SignalingGames} provides an upper bound on the number of quantization bins as a function of the bias $b$. Accordingly, for unbounded sources, the following problems are of interest: 
\begin{itemize}
	\item For unbounded sources, either one-sided or two-sided, is there an upper bound on the number of bins at the equilibrium as a function of bias $b$? As a special case, is it possible to have only a non-informative equilibrium; i.e., the upper bound on the number of bins is one?
	\item Is it possible to have an equilibrium with infinitely many bins?
\end{itemize}
At this point, one can ask why bounding the number of bins is important. Finding such bound is useful since if one can show that there only exists a finite number of bins, and if for every bin there is a finite number of distinct equilibria, then the total number of equilibria would be finite; this will allow for a feasible setting where the decision makers can coordinate their policies.

Furthermore, in a recent work, where we generalized signaling games and cheap talk problems to dynamic / multi-stage setups, a crucial property that allowed the generalization was the assumption that the number of bins for each stage equilibrium, conditioned on the past actions, is uniformly bounded \cite[Theorem 2.4]{dynamicGameArxiv}. In view of this, showing that the number of bins is finite would be a useful technical result. 

\subsubsection{Equilibrium Selection}
Attaining the upper bound $N$ on the number of bins at the equilibrium implies that there exists at least one equilibrium with $1,2,\ldots\,N$ bins due to \cite[Theorem 1]{SignalingGames}, and thus, a new question arises: among these multiple equilibria, which one is preferred by the players? Results in \cite{SignalingGames} show that an equilibrium with more bins is preferable for both the encoder and the decoder for any source with a density bounded on $[0,1]$. Accordingly, for more general sources, one can ask that 
\begin{itemize}
	\item if there exist more than one equilibrium, which one of these should be selected by the encoder and decoder?
	\item to what extent can one argue that more bins lead to better performance?
\end{itemize}
Indeed, it is important to know whether in general a higher number of bins implies more desirable equilibria. If such a monotonic behavior holds for a class of sources, then both players will prefer to have an equilibrium with the highest number of bins.

\subsection{Preliminaries}

In cheap talk, there are two players with misaligned objective functions. An informed player (encoder) knows the value of an $\mathbb{M}$-valued random variable $M$ and transmits an $\mathbb{X}$-valued random variable $X$ to another player (decoder), who generates his $\mathbb{M}$-valued decision $U$ upon receiving $X$. The policies of the encoder and decoder are assumed to be deterministic; i.e., $x=\gamma^e(m)$ and $u=\gamma^d(x)=\gamma^d(\gamma^e(m))$. Let $c^e(m,u)$ and $c^d(m,u)$ denote the cost functions of the encoder and the decoder, respectively, when the action $u$ is taken for the corresponding message $m$. Then, given the encoding and decoding policies, the encoder's induced expected cost is $J^e\left(\gamma^e,\gamma^d\right) = \mathbb{E}\left[c^e(M, U)\right]$, whereas the decoder's induced expected cost is $J^d\left(\gamma^e,\gamma^d\right) = \mathbb{E}\left[c^d(M, U)\right]$. Here, we assume real valued random variables and quadratic cost functions; i.e., $\mathbb{M}=\mathbb{X}=\mathbb{R}$, $c^e\left(m,u\right) = \left(m-u-b\right)^2$ and $c^d\left(m,u\right) = \left(m-u\right)^2$, where $b$ denotes a bias term which is common knowledge between the players. We assume the simultaneous-move game; i.e., the encoder and the decoder announce their policies at the same time. Then a pair of policies $(\gamma^{*,e}, \gamma^{*,d})$ is said to be a Nash equilibrium \cite{basols99} if
\begin{align}
	\begin{split}
		J^e(\gamma^{*,e}, \gamma^{*,d}) &\leq J^e(\gamma^{e}, \gamma^{*,d})  \text{ for all } \gamma^e \in \Gamma^e \,,\\
		J^d(\gamma^{*,e}, \gamma^{*,d}) &\leq J^d(\gamma^{*,e}, \gamma^{d})  \text{ for all }\gamma^d \in \Gamma^d \,,
		\label{eq:nashEquilibrium}
	\end{split}
\end{align}
where $\Gamma^e$ and $\Gamma^d$ are the sets of all deterministic (and Borel measurable) functions from $\mathbb{M}$ to $\mathbb{X}$ and from $\mathbb{X}$ to $\mathbb{M}$, respectively. As observed from the definition in \eqref{eq:nashEquilibrium}, under the Nash equilibrium, each individual player chooses an optimal strategy given the strategy chosen by the other player. 

Due to results obtained in \cite{SignalingGames} and \cite{tacWorkArxiv}, we know that the encoder policy consists of convex cells (bins) at a Nash equilibrium\footnote{We note that, unlike Crawford and Sobel's simultaneous Nash equilibrium formulation, if one considers a Stackelberg formulation (see \cite[p.133]{basols99} for a definition), then the problem would reduce to a classical communication problem since the encoder would be committed a priori and the equilibrium would not be quantized; i.e., there exist affine equilibria \cite{tacWorkArxiv,dynamicGameArxiv,bayesianPersuasion,CedricWork,akyolITapproachGame,omerHierarchial,frenchGroup}.}. Now consider an equilibrium with $N$ bins, and let the $k$-th bin be the interval $[\mm_{k-1},\mm_k)$ with $\mm_0<\mm_1<\ldots<\mm_N$ and let $l_k$ denote the length of the $k$-th bin; i.e., $l_k=\mm_k-\mm_{k-1}$ for $k=1,2,\ldots,N$ (Note that $\mm_0=0$ and $\mm_N=+\infty$ for an exponential source, whereas $\mm_0=-\infty$ and $\mm_N=+\infty$ for a Gaussian source). By \cite[Theorem 3.2]{tacWorkArxiv}, at the equilibrium, decoder's best response to encoder's action is characterized by 
\begin{align} \label{centroid}
	u_k = \mathbb{E}[M|\mm_{k-1}\leq M<\mm_k] 
\end{align}
for the $k$-th bin; i.e., the optimal decoder action is the centroid for the corresponding bin. From  encoder's point of view, the best response of the encoder to decoder's action is determined by the nearest neighbor condition \cite[Theorem 3.2]{tacWorkArxiv} as follows:
\begin{align} 
	u_{k+1}-\mm_k = (\mm_k - u_k) - 2b \Leftrightarrow \mm_k = \frac{u_k + u_{k+1}}{2}+b \,. 
	\label{eq:centroidBoundaryEq}
\end{align}
Due to the definition of Nash equilibirum \cite{basols99}, these best responses in \eqref{centroid} and \eqref{eq:centroidBoundaryEq} must match each other, and only then the equilibrium can be characterized; i.e., for a given number of bins, the positions of the bin edges are chosen by the encoder, and the centroids are determined by the decoder. Alternatively, the problem can be considered as a quantization game in which the boundaries are determined by the encoder and the reconstruction values are determined by the decoder.

Note that at the equilibrium of this quantization game, the relation between the encoder cost and the decoder cost can be expressed as follows:
\begin{align*}
	J^e(\gamma^{*,e}, \gamma^{*,d}) &= \sum_{i=1}^{N}\mathrm{Pr}(\mm_{i-1}<M<\mm_i)\mathbb{E}\left[(M-\mathbb{E}[M|\mm_{i-1}\leq M<\mm_i]-b)^2|\mm_{i-1}<M<\mm_i\right] \nn\\
	&=\sum_{i=1}^{N}\mathrm{Pr}(\mm_{i-1}<M<\mm_i)\left(\mathbb{E}\left[(M-\mathbb{E}[M|\mm_{i-1}\leq M<\mm_i])^2|\mm_{i-1}<M<\mm_i\right]+b^2\right) \nn\\
	&= J^d(\gamma^{*,e}, \gamma^{*,d})+b^2 \,.
\end{align*}
Since the difference between the encoder cost and the decoder cost is always $b^2$ regardless of the number of bins, the equilibrium preferences (i.e., which equilibrium to select) for the encoder and the decoder are aligned under the quadratic cost assumption.

Based on the above, the problems we consider in this paper can be formulated more formally as follows:

\subsubsection{Number of Bins}
For a given finite (or infinite) $N$, does there exist an equilibrium; i.e., is it possible to find the optimal encoder actions (the boundaries of the bins) $\mm_0,\mm_1,\ldots,\mm_N$ and decoder actions (the centroids of the bins) $u_1,u_2,\ldots,u_N$ which satisfy \eqref{centroid} and \eqref{eq:centroidBoundaryEq} simultaneously? Here two possible different methods are:
\begin{enumerate}
	\item[(i)] Lloyd's Method I: After the initial selection of $\mm_0,\mm_1,\ldots,\mm_N$, determine $u_1,u_2,\ldots,u_N$ by \eqref{centroid}, and after updating $u_1,u_2,\ldots,u_N$, find the corresponding $\mm_0,\mm_1,\ldots,\mm_N$ by \eqref{eq:centroidBoundaryEq}. Then, continue this iteration. For this approach, the convergence of this Lloyd-Max iteration is the key issue.
	\item[(ii)] Fixed-point approach: By combining \eqref{centroid} and \eqref{eq:centroidBoundaryEq}, 
	\[\mm_k = \frac{\mathbb{E}[M|\mm_{k-1}\leq M<\mm_k] + \mathbb{E}[M|\mm_{k}\leq M<\mm_k+1]}{2}+b\] 
	is obtained for $k=1,2,\ldots,N$. Then, the problem reduces to determining whether there exists a fixed vector $\mm_0,\mm_1,\ldots,\mm_N$ satisfying these equations. 
\end{enumerate}

\subsubsection{Equilibrium Selection} 
Let $J^{d,N}$ denote the decoder cost at the equilibrium with $N$ bins. Then, is it true that $J^{d,N}>J^{d,N+1}$ for any finite $N$, or even, is $J^{d,N}>J^{d,\infty}$ if an equilibrium with infinitely many bins exists? 

\subsection{Related Literature}

Cheap talk and signaling game problems find applications in networked control systems when a communication channel/network is present among competitive and non-cooperative decision makers \cite{basols99,misBehavingAgents}. Also, there have been a number of related results in the economics and control literature in addition to the seminal work by Crawford and Sobel, which are reviewed in \cite{tacWorkArxiv,dynamicGameArxiv} (see \cite{sobelSignal} for an extensive survey).

The quantized nature of the equilibrium makes game theory connected with the quantization theory. For a comprehensive survey regarding the history of  quantization and results on the optimality and convergence properties of different quantization techniques (including Lloyd's methods), we refer to \cite{quantizationSurvey}. In particular, \cite{Fleischer64} shows that, for sources with a log-concave density, Lloyd’s Method I converges to the unique optimal quantizer. It was shown in \cite{trushkin82} and \cite{kieffer83} that Lloyd’s Method I converges to the globally optimal quantizer if the source density is continuous and log-concave, and if the error weighting function is convex and symmetric. For sources with bounded support, the condition on the source was relaxed to include all continuous and positive densities in \cite{wu92}, and convergence of Lloyd’s Method I to a (possibly) locally optimal quantizer was proved. The number of bins of an optimal entropy-constrained quantizer is investigated in \cite{gyorgyLinder2003}, and conditions under which the number of bins is finite or infinite are presented. As an application to smart grids, \cite{csLloydMax} considers the design of signaling schemes between a consumer and an electricity aggregator with finitely many messages (signals); the best responses are characterized and the maximum number of messages (i.e., quantization bins) are found using Lloyd's Method II via simulation.

The existence of multiple quantized equilibria necessitates a theory to specify which equilibrium point is the solution of a given game. Two different approaches are taken to achieve a unique equilibrium. 
One of them reduces the multiplicity of equilibria by requiring that off-the-equilibrium-path beliefs satisfy an additional restriction (e.g., by shrinking the set of players' rational choices) \cite{eqSelectionRefinement}, \cite{sobelSignal}. As introduced in \cite{eqSelectionBook}, the other approach presents a theory that selects a unique equilibrium point for each finite game as its solution; i.e., one and only one equilibrium points out of the set of all equilibrium points of this kind (e.g., see \cite{eqSelectionSignaling} for the application). 

\subsection{Contributions}
\begin{enumerate}
	\item[(i)] Under the exponential source assumption with a negative bias; i.e., $b<0$, we obtain an upper bound on the number of bins at the equilibrium and show that the equilibrium cost reduces as the number of bins increases.
\item[(ii)] Under the exponential source assumption with a positive bias; i.e., $b>0$, we prove that there exists a unique equilibrium with $N$ bins for any $N\in\mathbb{N}$ and there is no upper bound on the number of bins; in fact, there exist equilibria with infinitely many bins. Further, the equilibrium cost achieves its minimum at the equilibrium with infinitely many bins. 
\item[(iii)] Under the Gaussian source assumption, we show that there always exist equilibria with infinitely many bins regardless of the value of $b$.
\end{enumerate}

\section{Exponential Distribution} \label{EXP}

In this section, the source is assumed to be exponential and the number of bins at the equilibria is investigated. Before delving into the technical results, we observe the following fact:
\begin{fact} \label{fact:exponential}
	Let $M$ be an exponentially distributed r.v. with a positive parameter $\lambda$: i.e., the probability distribution function (PDF) of $M$ is $\mathsf{f}(m)=\lambda\me^{-\lambda m}$ for $m\geq0$. The expectation and the variance of an exponential r.v. truncated to the interval $[a,b]$ are $\mathbb{E}[M|a<M<b]={1\over\lambda}+a-{b-a\over\me^{\lambda (b-a)}-1}$ and $\mathrm{Var}\left(M|a<M<b\right)={1\over\lambda^2} - {(b-a)^2\over\me^{\lambda (b-a)}+\me^{-\lambda (b-a)}-2}$, respectively.
\end{fact}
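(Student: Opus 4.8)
The plan is to verify both identities by direct integration, first using the memorylessness of the exponential law to eliminate the parameter $a$. Conditioning $M$ on $\{M>a\}$ makes $M-a$ exponential with the same parameter $\lambda$, and further conditioning on $\{M<b\}$ is the same as conditioning $M-a$ on $\{M-a<b-a\}$; hence it suffices to treat the case $a=0$ on a bin of length $L:=b-a$, after which the general formulas follow by adding $a$ to the mean (and leaving the variance unchanged, since variance is translation invariant).

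For $a=0$ I would first record the normalizing probability $\mathrm{Pr}(0<M<L)=1-\me^{-\lambda L}$, and then obtain, by one and by two integrations by parts respectively,
\begin{align*}
	\int_0^L m\,\lambda\me^{-\lambda m}\,dm &= \tfrac1\lambda\bigl(1-\me^{-\lambda L}\bigr)-L\me^{-\lambda L},\\
	\int_0^L m^2\,\lambda\me^{-\lambda m}\,dm &= \tfrac2{\lambda^2}\bigl(1-\me^{-\lambda L}\bigr)-\tfrac{2L}\lambda\me^{-\lambda L}-L^2\me^{-\lambda L}.
\end{align*}
Dividing by $\mathrm{Pr}(0<M<L)$ and writing $q:=\dfrac{\me^{-\lambda L}}{1-\me^{-\lambda L}}=\dfrac1{\me^{\lambda L}-1}$ gives $\mathbb{E}[M\mid 0<M<L]=\tfrac1\lambda-Lq$, which is exactly the claimed mean formula once $a$ is added back, and $\mathbb{E}[M^2\mid 0<M<L]=\tfrac2{\lambda^2}-\tfrac{2L}\lambda q-L^2q$.

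For the variance I would subtract the square of the mean:
\[
	\mathrm{Var}(M\mid 0<M<L)=\Bigl(\tfrac2{\lambda^2}-\tfrac{2L}\lambda q-L^2q\Bigr)-\Bigl(\tfrac1\lambda-Lq\Bigr)^2=\tfrac1{\lambda^2}-L^2q(1+q),
\]
where the cross terms $\mp\tfrac{2L}\lambda q$ cancel. Finally $q(1+q)=\dfrac{\me^{\lambda L}}{(\me^{\lambda L}-1)^2}$, so $\dfrac1{q(1+q)}=\dfrac{(\me^{\lambda L}-1)^2}{\me^{\lambda L}}=\me^{\lambda L}+\me^{-\lambda L}-2$, which yields the stated expression for the variance with $L=b-a$.

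There is no real obstacle here: the computation is elementary. The only place one can slip is the algebra in the variance step, and the reduction to $a=0$ together with the substitution $q=1/(\me^{\lambda L}-1)$ keeps it short, since the reciprocal of $q(1+q)$ collapses cleanly to $\me^{\lambda L}+\me^{-\lambda L}-2$.
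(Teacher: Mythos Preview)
Your proof is correct. The paper takes a more direct brute-force route: it computes the indefinite integrals $\int \lambda m\me^{-\lambda m}\,\mathrm{d}m$ and $\int \lambda m^2\me^{-\lambda m}\,\mathrm{d}m$ by parts, evaluates both between the original endpoints $a$ and $b$, and then carries both endpoints through several lines of algebraic simplification to reach the stated closed forms. Your use of memorylessness to reduce to $a=0$ with $L=b-a$, together with the shorthand $q=1/(\me^{\lambda L}-1)$, is a genuine streamlining: it removes one endpoint from the computation entirely, makes the cancellation of the $\mp 2Lq/\lambda$ cross terms transparent, and collapses the variance denominator via the single identity $1/(q(1+q))=\me^{\lambda L}+\me^{-\lambda L}-2$. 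The paper's approach has the minor advantage of being completely mechanical and not invoking any distributional property, but yours is shorter and less error-prone in the variance step.
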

\begin{proof}
	Consider the following integral:
\begin{align}
	\int \lambda m\me^{-\lambda m}\mathrm{d}m \hspace*{1.0em} &\stackrel{\mathclap{\substack{{s=\lambda m} \\ {\mathrm{d}s=\lambda \mathrm{d}m}}}}{=} \hspace*{1.0em}  \int{1\over\lambda}s\me^{-s}\mathrm{d}s \overset{\substack{{u=s,\,\mathrm{d}v={\me^{-s}\mathrm{d}s/\lambda}} \\ {\mathrm{d}u=\mathrm{d}s,\,v={-\me^{-s}/\lambda}}}}{=} {-s\me^{-s}\over\lambda}-\int {-\me^{-s}\over\lambda}\mathrm{d}s \nn\\
	&= {-s\me^{-s}\over\lambda}-{\me^{-s}\over\lambda} 
	\overset{s=\lambda m}{=}  -m\me^{-\lambda m}-{\me^{-\lambda m}\over\lambda} \,.
	\label{eq:expMeanIntegral}
\end{align}
Then, the expectation of an exponential r.v. truncated to $[a,b]$ will be
\begin{align}
	\mathbb{E}[M|a<M<b] &= \int_{a}^{b} m {\lambda\me^{-\lambda m}\over\int_{a}^{b}\lambda\me^{-\lambda m}}\mathrm{d}m = {\int_{a}^{b} m\lambda\me^{-\lambda m}\mathrm{d}m\over\int_{a}^{b}\lambda\me^{-\lambda m}\mathrm{d}m} = {\left(-m\me^{-\lambda m}-{\me^{-\lambda m}\over\lambda}\right)\bigg\rvert_a^b \over -\me^{-\lambda m}\bigg\rvert_a^b} \nn\\
	&={-b\me^{-\lambda b}-{\me^{-\lambda b}\over\lambda} + a\me^{-\lambda a}+{\me^{-\lambda a}\over\lambda}\over-\me^{-\lambda b}+\me^{-\lambda a} } = {1\over\lambda}+{a\me^{\lambda b}-b\me^{\lambda a}\over \me^{\lambda b}-\me^{\lambda a}} \nn\\
	&={1\over\lambda}+a-{\me^{\lambda a}(b-a)\over\me^{\lambda b}-\me^{\lambda a}} = {1\over\lambda}+a-{b-a\over\me^{\lambda (b-a)}-1} \,.
	\label{eq:expCentroid}
\end{align}

Now consider the following integral:
\begin{align}
	\int \lambda m^2\me^{-\lambda m}\mathrm{d}m \hspace*{4.5em} & \stackrel{\mathclap{\substack{{u=\lambda m^2,\,\mathrm{d}v={\me^{-\lambda m}\mathrm{d}m}} \\ {\mathrm{d}u=2\lambda m\mathrm{d}m,\,v={-\me^{-\lambda m}/\lambda}}}}}{=} \hspace*{4.5em} \lambda m^2{-\me^{-\lambda m}\over\lambda}-\int {-\me^{-\lambda m}\over\lambda}2\lambda m\mathrm{d}m \nn\\
	&= -m^2\me^{-\lambda m} + {2\over\lambda}\int \lambda m\me^{-\lambda m}\mathrm{d}m  \nn\\
	& \stackrel{\mathclap{\text{(a)}}}{=}  -m^2\me^{-\lambda m} - {2m\me^{-\lambda m}\over\lambda}-{2\me^{-\lambda m}\over\lambda^2} \,.
\end{align}
Here, (a) holds due to \eqref{eq:expMeanIntegral}. Then, 
\begin{align}
	\mathrm{Var}\left(M|a<M<b\right) &= \mathbb{E}[M^2|a<M<b] - \left(\mathbb{E}[M|a<M<b]\right)^2 \nn\\
	&= \int_{a}^{b} m^2 {\lambda\me^{-\lambda m}\over\int_{a}^{b}\lambda\me^{-\lambda m}}\mathrm{d}m - \left({-b\me^{-\lambda b}-{\me^{-\lambda b}\over\lambda} + a\me^{-\lambda a}+{\me^{-\lambda a}\over\lambda}\over-\me^{-\lambda b}+\me^{-\lambda a} }\right)^2 \nn\\
	&= {-{\me^{-\lambda b}\over\lambda^2}\left(\lambda^2b^2+2\lambda b+2\right)+{\me^{-\lambda a}\over\lambda^2}\left(\lambda^2a^2+2\lambda a+2\right)\over-\me^{-\lambda b}+\me^{-\lambda a}} - {\left(-{\me^{-\lambda b}\over\lambda}\left(\lambda b+1\right)+{\me^{-\lambda a}\over\lambda}\left(\lambda a+1\right)\right)^2\over\left(-\me^{-\lambda b}+\me^{-\lambda a}\right)^2} \nn\\
	&= {{\me^{-2\lambda b}\over\lambda^2}\left(\lambda^2b^2+2\lambda b+2\right)+{\me^{-2\lambda a}\over\lambda^2}\left(\lambda^2a^2+2\lambda a+2\right)\over\left(-\me^{-\lambda b}+\me^{-\lambda a}\right)^2} - {{\me^{-\lambda (a+b)}\over\lambda^2}\left(\lambda^2a^2+\lambda^2b^2+2\lambda a+2\lambda b+4\right)\over\left(-\me^{-\lambda b}+\me^{-\lambda a}\right)^2} \nn\\
	&\qquad - {{\me^{-2\lambda b}\over\lambda^2}\left(\lambda^2b^2+2\lambda b+1\right)+{\me^{-2\lambda a}\over\lambda^2}\left(\lambda^2a^2+2\lambda a+1\right)\over\left(-\me^{-\lambda b}+\me^{-\lambda a}\right)^2} + {{\me^{-\lambda (a+b)}\over\lambda^2}\left(2\lambda^2ab+2\lambda a+2\lambda b+2\right)\over\left(-\me^{-\lambda b}+\me^{-\lambda a}\right)^2} \nn\\
	&= {{\me^{-2\lambda b}\over\lambda^2}+{\me^{-2\lambda a}\over\lambda^2} - {\me^{-\lambda (a+b)}\over\lambda^2}\left(\lambda^2 a^2 + \lambda^2 b^2 + 2 - 2\lambda^2ab\right) \over\left(-\me^{-\lambda b}+\me^{-\lambda a}\right)^2} \nn\\
	&= {{\me^{-2\lambda b}\over\lambda^2}+{\me^{-2\lambda a}\over\lambda^2} - {2\me^{-\lambda (a+b)}\over\lambda^2} - {\me^{-\lambda (a+b)}\over\lambda^2}\left(\lambda^2 (b-a)^2\right) \over\me^{-2\lambda b}+\me^{-2\lambda a}-2\me^{-\lambda (a+b)}} \nn\\
	&= {1\over\lambda^2} - {(b-a)^2\over\me^{-\lambda b + \lambda a}+\me^{-\lambda a + \lambda b}-2} \nn\\
	&= {1\over\lambda^2} - {(b-a)^2\over\me^{\lambda (b-a)}+\me^{-\lambda (b-a)}-2} \,.
	\label{eq:expTruncVariance}
\end{align}
\end{proof}
	
The following result shows the existence of an equilibrium with finitely many bins. Here, $\lfloor x\rfloor$ denotes the largest integer less than or equal to $x$.

\begin{prop} \label{prop1}
	Suppose $M$ is exponentially distributed with parameter $\lambda$. Then, for $b<0$, any Nash equilibrium is deterministic and can have at most $\lfloor -\frac{1}{2b\lambda} + 1\rfloor$ bins with monotonically increasing bin-lengths.
\end{prop}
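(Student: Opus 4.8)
The plan is to collapse the coupled equilibrium conditions \eqref{centroid}--\eqref{eq:centroidBoundaryEq} into one scalar recursion and to exploit that a natural state variable along it is simultaneously bounded and forced to grow by a fixed amount at each step. Fix any Nash equilibrium; by \cite{SignalingGames,tacWorkArxiv} it is deterministic and quantized, hence described by thresholds $0=\mm_0<\mm_1<\cdots<\mm_{N-1}<\mm_N=+\infty$ (with $N\in\mathbb{N}\cup\{\infty\}$ a priori), centroids $u_k=\mathbb{E}[M\mid \mm_{k-1}\le M<\mm_k]$, and the edge condition $\mm_k=\tfrac{u_k+u_{k+1}}{2}+b$. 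I would track the \emph{left gap} $\alpha_k:=u_k-\mm_{k-1}$. By Fact~\ref{fact:exponential}, for a finite bin $\alpha_k=\tfrac1\lambda-\tfrac{l_k}{\me^{\lambda l_k}-1}$ with $l_k=\mm_k-\mm_{k-1}$, while memorylessness of the exponential gives $\alpha_N=1/\lambda$ for the terminal semi-infinite bin; in particular $0<\alpha_k<1/\lambda$ for every finite bin, and $\alpha_k$ is a strictly increasing function of $l_k$ since $l\mapsto l/(\me^{\lambda l}-1)$ is strictly decreasing on $(0,\infty)$.

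Substituting $u_k=\mm_{k-1}+\alpha_k$ and $u_{k+1}=\mm_k+\alpha_{k+1}$ into the edge condition yields $l_k=\alpha_k+\alpha_{k+1}+2b$, i.e. $\alpha_{k+1}=(\mm_k-u_k)+2|b|$, for every non-terminal $k$. The crux is the inequality $\mm_k-u_k>u_k-\mm_{k-1}$: the centroid of the exponential truncated to $[\mm_{k-1},\mm_k)$ lies strictly below the interval midpoint, which holds because the density is strictly decreasing and can also be seen directly --- with $x=\lambda l_k>0$, $\lambda\big((\mm_k-u_k)-(u_k-\mm_{k-1})\big)=q(x)/(\me^x-1)$ where $q(x):=\me^x(x-2)+x+2$ satisfies $q(0)=q'(0)=0$ and $q''(x)=x\me^x>0$, so $q>0$ on $(0,\infty)$. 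Combining, $\alpha_{k+1}>\alpha_k+2|b|$ for every non-terminal $k$.

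Telescoping then finishes everything. An infinite equilibrium would force $\alpha_k>\alpha_1+(k-1)2|b|\to\infty$, impossible since $\alpha_k<1/\lambda$; so $N<\infty$, and then $1/\lambda-\alpha_1=\alpha_N-\alpha_1>(N-1)2|b|$ together with $\alpha_1>0$ gives $(N-1)2|b|<1/\lambda$, i.e. $N<-\tfrac1{2b\lambda}+1$, hence $N\le\lfloor-\tfrac1{2b\lambda}+1\rfloor$. Finally $\alpha_1<\alpha_2<\cdots<\alpha_{N-1}<\tfrac1\lambda=\alpha_N$, and since $\alpha_k$ increases with $l_k$ this gives $l_1<l_2<\cdots<l_{N-1}<l_N=+\infty$, the asserted monotonicity of bin-lengths. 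I expect the only genuine obstacle to be choosing the right potential: once one sees that $\alpha_k$ (not $l_k$, and not the right gap $\mm_k-u_k$) is the quantity that is both uniformly bounded and strictly monotone along the recursion, the rest is algebra on the two fixed-point equations plus the elementary sign analysis of $q$ (equivalently, the fact that a decreasing density has mean below its interval midpoint).
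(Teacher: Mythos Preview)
Your proof is correct and follows essentially the same route as the paper: both arguments track the left gap $u_k-\mm_{k-1}$, use the edge condition $u_{k+1}-\mm_k=(\mm_k-u_k)-2b$ together with the ``centroid below midpoint'' inequality $\mm_k-u_k>u_k-\mm_{k-1}$ to telescope, and read off the bin bound from $u_N-\mm_{N-1}=1/\lambda$. The only minor difference is in the monotonicity step: you infer $l_1<\cdots<l_{N-1}$ from the monotone bijection $l\mapsto \alpha(l)=\tfrac1\lambda-\tfrac{l}{\me^{\lambda l}-1}$, whereas the paper bounds $l_k-l_{k-1}$ directly (getting the slightly sharper $l_k>l_{k-1}+4|b|$); both are valid.
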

\begin{proof}
	Since $u_N = \mathbb{E}[M|\mm_{N-1}\leq M\leq \mm_{N}=\infty] = \mm_{N-1} + {1\over\lambda}$, it follows that
\begin{align*}
\frac{1}{\lambda} = u_{N} - \mm_{N-1} 	&= (\mm_{N-1} - u_{N-1}) - 2b\\
&> (u_{N-1} - \mm_{N-2}) - 2b\\
&=  (\mm_{N-2} - u_{N-2}) - 2(2b)\\
&\vdots\\
&> u_1-\mm_0 - (N-1)(2b)\\
&> -(N-1)(2b) \,.
\end{align*}
	Here, the inequalities follow from the fact that the exponential PDF is monotonically decreasing. Hence, for $b<0$,
	\begin{align*}
		N < -{1\over2b\lambda} + 1 \Rightarrow N \leq \bigg\lfloor -{1\over2b\lambda} + 1 \bigg \rfloor \;.
	\end{align*}
	Now, consider the bin-lengths as follows:
	\begin{align}
		l_k &= \mm_k-\mm_{k-1} = (\mm_k-u_k) + (u_k-\mm_{k-1}) \geq (u_k-\mm_{k-1}) + (u_k-\mm_{k-1}) \nn\\
		&= (\mm_{k-1}-u_{k-1}-2b) + (\mm_{k-1}-u_{k-1}-2b) \nn\\
		&> (\mm_{k-1}-u_{k-1}-2b) + (u_{k-1}-\mm_{k-2}-2b) \nn\\
		&= (\mm_{k-1}-u_{k-1}) + (u_{k-1}-\mm_{k-2}) -4b = \mm_{k-1}-\mm_{k-2} - 4b = l_{k-1}-4b \nn\\
		&\Rightarrow l_k > l_{k-1} \,.\label{eq:expMonotonBinLengths}
	\end{align}
	Thus, the bin-lengths are monotonically increasing; i.e., $l_1<l_2<\ldots<l_{N-1}<l_N=\infty$.
\end{proof}

This is an important result as it provides us with a closed form expression for the maximum bit rate required by a certain system to operate at a steady state. For example, there can be at most one bin at the equilibrium (i.e., a non-informative equilibrium) if $ N \leq \big\lfloor -{1\over2b\lambda} + 1 \big \rfloor<2 \Leftrightarrow -{1\over2b\lambda}<1 \Leftrightarrow b<-{1\over2\lambda}$. However, this result does not characterize the equilibrium completely; i.e., it does not give a condition on the existence of an equilibrium with two or more bins. The following theorem characterizes the equilibrium with two bins, and forms a basis for equilibria with more bins:
\begin{thm}
	When the source has an exponential distribution with parameter $\lambda$, there exist only non-informative equilibria if and only if $b\leq-{1\over2\lambda}$. An equilibrium with at least two bins is achievable if and only if $b>-{1\over2\lambda}$. 
\end{thm}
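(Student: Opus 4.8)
The plan is to treat the two sentences of the statement as one claim, since ``an equilibrium with at least two bins is achievable'' is exactly the negation of ``only non-informative (one-bin) equilibria exist.'' Thus it suffices to prove: (a) if $b\le-\frac{1}{2\lambda}$ then every Nash equilibrium has a single bin; and (b) if $b>-\frac{1}{2\lambda}$ then a Nash equilibrium with two bins exists. Part (a) will be a quick corollary of the estimates already used in the proof of Proposition~\ref{prop1}; part (b) requires exhibiting an explicit two-bin equilibrium, and that is the substantive step.

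For (a): when $b<0$, Proposition~\ref{prop1} already guarantees finitely many bins, say $N$. Suppose $N\ge2$. As in the first link of the inequality chain in the proof of Proposition~\ref{prop1}, $\frac{1}{\lambda}=u_N-\mm_{N-1}=(\mm_{N-1}-u_{N-1})-2b$, which exceeds $(u_{N-1}-\mm_{N-2})-2b$ because the exponential density is decreasing (so $u_{N-1}<\frac{\mm_{N-2}+\mm_{N-1}}{2}$), and this in turn exceeds $-2b$ since $u_{N-1}>\mm_{N-2}\ge0$. Hence $\frac{1}{\lambda}>-2b$, i.e.\ $b>-\frac{1}{2\lambda}$. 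Contrapositively, $b\le-\frac{1}{2\lambda}$ forces $N=1$, so only non-informative equilibria exist.

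For (b): I would use the fixed-point form of the equilibrium equations, parametrising a candidate two-bin equilibrium by its single interior edge $\mm_1\in(0,+\infty)$ (with $\mm_0=0$, $\mm_2=+\infty$). By Fact~\ref{fact:exponential}, the centroid conditions \eqref{centroid} give $u_1=\frac{1}{\lambda}-\frac{\mm_1}{\me^{\lambda\mm_1}-1}$ and $u_2=\mm_1+\frac{1}{\lambda}$; inserting these into \eqref{eq:centroidBoundaryEq} and simplifying collapses the whole system to the single scalar equation $g(\mm_1)=\frac{2}{\lambda}+2b$, where $g(m):=m+\frac{m}{\me^{\lambda m}-1}$. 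Since $u_1<\mm_1<u_2$ holds automatically, any root $\mm_1\in(0,+\infty)$ of this equation yields a genuine two-bin Nash equilibrium: both best-response conditions \eqref{centroid}, \eqref{eq:centroidBoundaryEq} hold and both bins have positive probability. It therefore remains only to produce such a root. I would invoke the intermediate value theorem: $g$ is continuous on $(0,+\infty)$, $\lim_{m\to0^+}g(m)=\frac{1}{\lambda}$ (because $\frac{m}{\me^{\lambda m}-1}\to\frac{1}{\lambda}$), and $\lim_{m\to+\infty}g(m)=+\infty$, while the hypothesis $b>-\frac{1}{2\lambda}$ is precisely $\frac{2}{\lambda}+2b>\frac{1}{\lambda}$; hence a root exists. (Rewriting $g(m)=\frac{1}{\lambda}\cdot\frac{\lambda m}{1-\me^{-\lambda m}}$ and using $\me^{t}>1+t$ for $t>0$ shows $g$ is in fact strictly increasing, which gives uniqueness of the two-bin equilibrium, though uniqueness is not part of the present statement.)

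The main obstacle is the elementary but decisive analysis of $g$ in part (b): it is the limit $\lim_{m\to0^+}g(m)=\frac{1}{\lambda}$ that places the critical bias at exactly $-\frac{1}{2\lambda}$ and makes the two directions dovetail — at $b=-\frac{1}{2\lambda}$ the target $\frac{2}{\lambda}+2b$ equals the infimum of $g$ over $(0,+\infty)$ but is not attained there, so no two-bin equilibrium exists, in agreement with part (a). Everything else is bookkeeping built on Fact~\ref{fact:exponential} and the inequalities of Proposition~\ref{prop1}.
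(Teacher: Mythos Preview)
Your proof is correct, and it takes a somewhat different route from the paper's own argument.

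Both you and the paper reduce the two-bin case to the same scalar equation, which in your notation reads $g(\mm_1)=\frac{2}{\lambda}+2b$ with $g(m)=m\cdot\dfrac{\me^{\lambda m}}{\me^{\lambda m}-1}$. From there, the paper substitutes and expresses the solution via the Lambert $W$-function, then does a four-case branch analysis on $-(2+2\lambda b)$ to decide when a positive root $\mm_1>0$ exists. You instead use the intermediate value theorem after computing $\lim_{m\to0^+}g(m)=\frac{1}{\lambda}$ and $\lim_{m\to\infty}g(m)=\infty$, and you observe that the hypothesis $b>-\frac{1}{2\lambda}$ is exactly $\frac{2}{\lambda}+2b>\frac{1}{\lambda}$. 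Your argument is more elementary and, as you note, also yields uniqueness once one checks $g$ is strictly increasing; the paper's approach, on the other hand, produces an explicit formula $\mm_1=\frac{1}{\lambda}W_0\!\left(-(2+2\lambda b)\me^{-(2+2\lambda b)}\right)+2\left(\frac{1}{\lambda}+b\right)$ and the numerical bracket $\frac{1}{\lambda}+2b<\mm_1<\frac{2}{\lambda}+2b$. For part~(a) you rely on the strict chain from the proof of Proposition~\ref{prop1} to get $\frac{1}{\lambda}>-2b$ whenever $N\ge2$; the paper instead reproves non-existence of a two-bin equilibrium for $b\le-\frac{1}{2\lambda}$ via its Lambert $W$ case analysis, which is redundant with Proposition~\ref{prop1} but self-contained. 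Either way the threshold $b=-\frac{1}{2\lambda}$ emerges from the same fact, namely that $g(0^+)=\frac{1}{\lambda}$ is not attained on $(0,\infty)$.
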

\begin{proof}
	Consider the two bins $[0=\mm_0,\mm_1)$ and $[\mm_1,\mm_2=\infty)$. Then, the centroids of the bins (the decoder actions) are $u_1=\mathbb{E}[M|0<M<\overline{m}_1]={1\over\lambda}-{\mm_1\over\me^{\lambda\mm_1}-1}$ and $u_2=\mathbb{E}[M|\overline{m}_1<M<\infty]={1\over\lambda}+\mm_1$. In view of \eqref{eq:centroidBoundaryEq}, an equilibrium with these two bins exists if and only if
\begin{align}
	\mm_1 &= {u_1+u_2\over2}+b = {{1\over\lambda}-{\mm_1\over\me^{\lambda\mm_1}-1}+{1\over\lambda}+\mm_1\over2}+b \Rightarrow {\mm_1\over2}{\me^{\lambda\mm_1}\over\me^{\lambda\mm_1}-1}={1\over\lambda}+b \Rightarrow \me^{\lambda\mm_1}\left({1\over\lambda}+b-{\mm_1\over2}\right) = {1\over\lambda}+b \,.
	\label{eq:twoBinsEq}
\end{align}
Note that in \eqref{eq:twoBinsEq}, $\mm_1=0$ is always a solution; however, in order to have an equilibrium with two bins, we need a non-zero solution to \eqref{eq:twoBinsEq}; i.e., $\mm_1>0$. For this purpose, the Lambert $W$-function will be utilized. Although the Lambert $W$-function is defined for complex variables, we restrict our attention to the real-valued $W$-function; i.e., the $W$-function is defined as
\begin{align*}
	W(x\me^x) &= x \text{ for } x\geq 0 \,,\nn\\
	W_0(x\me^x) &= x \text{ for } -1\leq x<0 \,,\nn\\
	W_{-1}(x\me^x) &= x \text{ for } x\leq -1 \,.
\end{align*}
As it can be seen, for $x\geq0$, $W(x\me^x)$ is a well-defined single-valued function. However, for $x<0$, $W(x\me^x)$ is doubly valued, such as $W(x\me^x)\in(-{1\over\me},0)$ and there exist $x_1$ and $x_2$ that satisfy $x_1\me^{x_1}=x_2\me^{x_2}$ where $x_1\in(-1,0)$ and $x_2\in(-\infty,-1)$. In order to differentiate these values, the principal branch of the Lambert $W$-function is defined to represent the values greater than $-1$; e.g., $x_1=W_0(x_1\me^{x_1})=W_0(x_2\me^{x_2})$. Similarly, the lower branch of the Lambert $W$-function represents the values smaller than $-1$; e.g, $x_2=W_{-1}(x_1\me^{x_1})=W_{-1}(x_2\me^{x_2})$. Further, for $x=-1$, two branches of the $W$-function coincide; i.e., $-1=W_0(-{1\over\me})=W_{-1}(-{1\over\me})$. Regarding the definition above, by letting $t\triangleq2\lambda\left({\mm_1\over2}-{1\over\lambda}-b\right)$, the solution of \eqref{eq:twoBinsEq} can be found as follows:
\begin{align}
	\me^{t+2+2\lambda b}\left(-t\over2\lambda\right) = {1\over\lambda}+b \Rightarrow t\me^t = -(2+2\lambda b)\me^{-(2+2\lambda b)} \Rightarrow t = W_0\left(-(2+2\lambda b)\me^{-(2+2\lambda b)}\right) \,.
	\label{eq:twoBinsLambert}
\end{align}
Note that, in \eqref{eq:twoBinsLambert}, depending on the values of $-(2+2\lambda b)$, the following cases can be considered: 
\begin{enumerate}
	\item [(i)] \underline{$-(2+2\lambda b) \geq 0$} : $t\me^t = -(2+2\lambda b)\me^{-(2+2\lambda b)} \Rightarrow t = -(2+2\lambda b) \Rightarrow \mm_1=0$, which implies a non-informative equilibrium; i.e., an equilibrium with only one bin.
	\item [(ii)] \underline{$-1 < -(2+2\lambda b) < 0$} : Since $t\me^t = -(2+2\lambda b)\me^{-(2+2\lambda b)}$, there are two possible solutions: 
	\begin{enumerate}
		\item If $t=W_0\left(-(2+2\lambda b)\me^{-(2+2\lambda b)}\right)=-(2+2\lambda b)$, we have $\mm_1=0$, as in the previous case.
		\item If $t=W_{-1}\left(-(2+2\lambda b)\me^{-(2+2\lambda b)}\right) \Rightarrow t<-1 \Rightarrow -1>t=2\lambda\left({\mm_1\over2}-{1\over\lambda}-b\right)=\lambda\mm_1-2-2\lambda b > \lambda\mm_1-1 \Rightarrow \lambda\mm_1<0$, which is not possible.
	\end{enumerate} 
	\item [(iii)] \underline{$-(2+2\lambda b) = -1$} : Since $t\me^t = -(2+2\lambda b)\me^{-(2+2\lambda b)}$, there is only one solution, $t=-(2+2\lambda b)=-1\Rightarrow\mm_1=0$, which implies that the equilibrium is non-informative.
	\item [(iv)] \underline{$-(2+2\lambda b) < -1$} : Since $t\me^t = -(2+2\lambda b)\me^{-(2+2\lambda b)}$, there are two possible solutions:
	\begin{enumerate}
		\item If $t=W_{-1}\left(-(2+2\lambda b)\me^{-(2+2\lambda b)}\right)=-(2+2\lambda b)$, we have $\mm_1=0$; i.e., an equilibrium with only one bin.
		\item If $t=W_0\left(-(2+2\lambda b)\me^{-(2+2\lambda b)}\right)$, we have $-1<t<0 \Rightarrow -1<\lambda\mm_1-2-2\lambda b<0 \Rightarrow {1\over\lambda}+2b<\mm_1<{2\over\lambda}+2b$. Thus, if we have ${1\over\lambda}+2b>0 \Rightarrow b>-{1\over2\lambda}$, then $\mm_1$ must be positive, which implies the existence of an equilibrium with two bins.
	\end{enumerate} 
\end{enumerate}
Thus, as long as $b\leq-{1\over2\lambda}$, there exists only one bin at the equilibrium; i.e., there exist only non-informative equilibria; and the equilibrium with two bins can be achieved only if $b>-{1\over2\lambda}$. In this case, $\mm_1={1\over\lambda}W_0\left(-(2+2\lambda b)\me^{-(2+2\lambda b)}\right)+2\left({1\over\lambda}+b\right)$. Note that, since $-1<W_0(\cdot)<0$, the boundary between two bins lies within the interval ${1\over\lambda}+2b<\mm_1<{2\over\lambda}+2b$.
\end{proof}
	
Contrarily to the negative bias case, the number of bins at the equilibrium is not bounded when the bias is positive. The following theorem investigates the case in which $b>0$:
\begin{thm}
	When the source has an exponential distribution with parameter $\lambda$, for $b>0$ and any number of bins $N$, 
\begin{enumerate}
	\item [(i)] There exists a unique equilibrium,
	\item [(ii)] The bin-lengths are monotonically increasing.
\end{enumerate}
Further, since the two statements above hold for any $N\in\mathbb{N}$, there exists no upper bound on the number of bins at an equilibrium.
\label{thm:expPosBias}
\end{thm}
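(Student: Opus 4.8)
The plan is to reduce the coupled best-response conditions \eqref{centroid}--\eqref{eq:centroidBoundaryEq} to a scalar \emph{backward} recursion on the finite bin-lengths $l_1,\dots,l_{N-1}$ (recall the last bin is unbounded), and then to analyze that recursion. Set $g(l):=\dfrac{l}{\me^{\lambda l}-1}$ and $\phi(l):=l+g(l)=\dfrac{l\,\me^{\lambda l}}{\me^{\lambda l}-1}$. By Fact~\ref{fact:exponential}, the centroid of the $k$-th finite bin is $u_k=\mm_{k-1}+\tfrac1\lambda-g(l_k)$, while the memoryless property of the exponential law gives $u_N=\mm_{N-1}+\tfrac1\lambda$. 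Substituting into \eqref{eq:centroidBoundaryEq} at the bin edge $\mm_k$ and simplifying, the equilibrium condition becomes
\[
\phi(l_k)=\frac{2}{\lambda}+2b-g(l_{k+1}),\qquad k=1,\dots,N-1,
\]
with the convention $g(l_N):=0$. I would first record the elementary monotonicity facts on $(0,\infty)$: $g$ is strictly decreasing from $\tfrac1\lambda$ to $0$; $\phi$ is strictly increasing from $\tfrac1\lambda$ to $\infty$; and $\psi(l):=l+2g(l)$ is strictly increasing from $\tfrac2\lambda$ to $\infty$. The only point here that is not immediate is $\psi'>0$, which after clearing denominators is equivalent to $\me^{2\lambda l}-1-2\lambda l\,\me^{\lambda l}>0$, i.e.\ to $\sinh(\lambda l)>\lambda l$.

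For (i) and the unboundedness claim I would read the recursion backwards. Since $b>0$, for every $l\in(0,\infty]$ the quantity $\tfrac2\lambda+2b-g(l)$ lies in $\big(\tfrac1\lambda+2b,\ \tfrac2\lambda+2b\big]\subset\big(\tfrac1\lambda,\infty\big)$, which is precisely the range of the bijection $\phi$; hence the map $T(l):=\phi^{-1}\!\big(\tfrac2\lambda+2b-g(l)\big)$ is well defined and strictly increasing on $(0,\infty]$, with image contained in $(\alpha,\beta]$, where $\alpha:=\phi^{-1}(\tfrac1\lambda+2b)>0$ and $\beta:=\phi^{-1}(\tfrac2\lambda+2b)$. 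The terminal equation ($k=N-1$, $g(l_N)=0$) forces $l_{N-1}=\beta$, after which $l_{N-2}=T(\beta),\ l_{N-3}=T(l_{N-2}),\dots,l_1=T(l_2)$ are uniquely determined and all lie in $(\alpha,\beta]\subset(0,\infty)$, so they are genuine positive bin-lengths. Conversely, every equilibrium with $N$ bins produces such a tuple by the reduction above, and, by \cite[Theorem~3.2]{tacWorkArxiv}, any configuration satisfying \eqref{centroid} and \eqref{eq:centroidBoundaryEq} is an equilibrium. Thus for every $N\in\mathbb{N}$ there is exactly one equilibrium with $N$ bins, and in particular the number of bins at an equilibrium is not bounded. (Along this orbit one also gets $u_{k+1}-u_k=\tfrac2\lambda+2b-2g(l_{k+1})>2b>0$, so the $N$ bins carry distinct decoder actions.)

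For (ii) I would show that the backward iterates strictly decrease, i.e.\ $T(l)<l$ along the orbit. A direct manipulation gives the equivalence $T(l)<l\iff\psi(l)>\tfrac2\lambda+2b$; since $\psi$ is a strictly increasing bijection from $(0,\infty)$ onto $(\tfrac2\lambda,\infty)$ and $b>0$, the map $T$ has a unique fixed point $l^{*}=\psi^{-1}(\tfrac2\lambda+2b)$, with $T(l)<l$ exactly when $l>l^{*}$. As $T$ is increasing with $T(l^{*})=l^{*}$, it maps $(l^{*},\infty]$ into $(l^{*},\,\cdot\,)$, so the orbit never dips below $l^{*}$. Finally $\beta>l^{*}$, because $\phi(\beta)=\tfrac2\lambda+2b=\psi(l^{*})>\phi(l^{*})$ (using $\psi>\phi$ pointwise) and $\phi$ is increasing; hence, starting from $l_{N-1}=\beta>l^{*}$ and iterating $T$, we obtain $l^{*}<l_1<l_2<\dots<l_{N-1}<l_N=\infty$, which is the asserted monotonicity.

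I expect the crux to be the last paragraph: isolating the auxiliary function $\psi(l)=l+2g(l)$, proving $\psi'>0$ via $\sinh(\lambda l)>\lambda l$, and checking $\beta>l^{*}$ so that the entire backward orbit stays on the branch $l>l^{*}$ where $T$ pulls toward smaller values. The other point requiring care is orienting the recursion correctly: the backward map $T$ is single-valued and defined everywhere thanks to $b>0$, whereas the forward recursion $g(l_{k+1})=\tfrac2\lambda+2b-\phi(l_k)$ is only conditionally solvable.
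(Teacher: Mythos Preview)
Your argument is correct and essentially parallel to the paper's, but the packaging differs in two respects worth noting.

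First, the paper solves the backward recursion explicitly via the Lambert $W$-function, obtaining closed-form expressions $l_{N-1}=\tfrac1\lambda W_0\!\big(-(2+2\lambda b)\me^{-(2+2\lambda b)}\big)+2(\tfrac1\lambda+b)$ and an analogous formula for each $l_k$. You work instead with the abstract inverse $\phi^{-1}$ and the map $T$, which is cleaner and avoids the branch bookkeeping of~$W$; the paper's $W$-representation, however, yields immediately readable numerical bounds such as $\tfrac1\lambda+2b<l_{N-1}<\tfrac2\lambda+2b$.

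Second, for monotonicity the paper runs a bare-hands induction: from $g(l_{k-1})=l_k+h(l_{k+1})$ (their $g,h$ are your $\phi,g$) it squeezes $l_{k-1}<l_k<l_{k-1}+\tfrac1\lambda$ directly. Your route through the auxiliary function $\psi(l)=l+2g(l)$ and the fixed point $l^{*}=\psi^{-1}(\tfrac2\lambda+2b)$ is a genuine alternative: you show $(l^{*},\infty]$ is $T$-invariant and that $T(l)<l$ there, so the backward orbit from $\beta>l^{*}$ is strictly decreasing and bounded below by $l^{*}$. This is slightly more structural and has the bonus of anticipating the next theorem in the paper (the infinite-bin equilibrium), where exactly this $l^{*}$ reappears as the common bin-length; the paper introduces $l^{*}$ only there, whereas you already have it in hand. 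Be aware that your $g,\phi$ are the paper's $h,g$ respectively, so align notation if you splice this in.
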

\begin{proof}
The proof consists of three main parts. After characterizing the equilibrium, the monotonicity of bin-lengths and the upper bound on the number of bins are investigated.

\underline{Part-I: Equilibrium Solution} : For the last two bins,
\begin{align}
	\mm_{N-1}&={u_{N-1}+u_N\over2}+b= {\left(\mm_{N-2}+{1\over\lambda}-{l_{N-1}\over\me^{\lambda l_{N-1}}-1}\right)+\left(\mm_{N-1}+{1\over\lambda}\right)\over2}+b\nn\\
	&\Rightarrow  {l_{N-1}}{\me^{\lambda l_{N-1}}\over\me^{\lambda l_{N-1}}-1}={2\over\lambda}+2b \label{eq:bin_N-1}\\
	&\Rightarrow l_{N-1}={1\over\lambda}W_0\left(-(2+2\lambda b)\me^{-(2+2\lambda b)}\right)+2\left({1\over\lambda}+b\right) \label{eq:bin_N-1_lambert}
\end{align}	
can be obtained. For the other bins; i.e., the $k$-th bin for $k=1,2,\ldots,N-2$, observe the following:    
\begin{align}
	u_{k+1} &- \mm_k = \mm_k - u_k - 2b = (\mm_k-\mm_{k-1}) - (u_k-\mm_{k-1}) - 2b \nn\\
	\Rightarrow & {1\over\lambda}-{l_{k+1}\over\me^{\lambda l_{k+1}}-1} = l_k - {1\over\lambda}+{l_{k}\over\me^{\lambda l_{k}}-1}-2b \nn\\
	\Rightarrow & l_{k}{\me^{\lambda l_{k}}\over\me^{\lambda l_k}-1} = {2\over\lambda}+2b - {l_{k+1}\over\me^{\lambda l_{k+1}}-1} \,.
	\label{eq:bin_k}
\end{align}
If we let $c_{k}\triangleq {2\over\lambda}+2b - {l_{k+1}\over\me^{\lambda l_{k+1}}-1}$, the solution to \eqref{eq:bin_k} is
\begin{align}
	l_{k} = {1\over\lambda}W_0\left(-\lambda c_{k}\me^{-\lambda c_{k}}\right)+c_{k} \,. \label{eq:bin_k_lambert}
\end{align}
It can be observed from \eqref{eq:bin_N-1_lambert} and \eqref{eq:bin_k_lambert} that the bin-lengths $l_1, l_2, \ldots, l_{N-1}$ have a unique solution, which implies that the bin edges have unique values as $\mm_0=0$, $\mm_k=\sum_{i=1}^{k} l_k$ for $k=1,2,\ldots,N-1$, and $\mm_N=\infty$.

In order to represent the solutions in a recursive form, define $g(l_k)\triangleq l_{k}{\me^{\lambda l_{k}}\over\me^{\lambda l_k}-1}$ and $h(l_k) \triangleq {l_{k}\over\me^{\lambda l_{k}}-1}$. Then, the recursions in \eqref{eq:bin_N-1} and \eqref{eq:bin_k} can be written as:
\begin{subequations}
	\begin{align}
		g(l_{N-1})	&={2\over\lambda}+2b \,,\label{eq:compactRecursionLast}\\
		g(l_k) 	&= {2\over\lambda}+2b-h(l_{k+1})\,, \qquad \mbox{for $k=1,2,\ldots,N-2$} \,.
		\label{eq:compactRecursion}
	\end{align}
\end{subequations}

\underline{Part-II: Monotonically Increasing Bin-Lengths} : The proof is based on induction. Before the induction step, in order to utilize \eqref{eq:compactRecursion}, we examine the structure of $g$ and $h$. First note that both functions are continuous and differentiable on $[0,\infty)$. Now, $g$ has the following properties:
\begin{itemize}	
	\item $g(0) = \lim_{s\to0} \frac{s\me^{\lambda s}}{\me^{\lambda s}-1} \stackrel{H}{=} \lim_{s\to0} \frac{1+\lambda s}{\lambda} = \frac{1}{\lambda} > 0\quad$  ($\stackrel{H}{=}$ represents l'H\^ospital's rule), 
	\item $\lim_{s\to\infty} g(s) = \lim_{s\to\infty} \frac{s\me^{\lambda s}}{\me^{\lambda s}-1} \stackrel{H}{=} \lim_{s\to\infty} \frac{1+\lambda s}{\lambda} = \infty$,
	\item $\frac{\mathrm{d}}{\mathrm{d}s}(g(s))|_{s=0} = \lim_{s\to0} \frac{\me^{\lambda s}(\me^{\lambda s}-\lambda s-1)}{(\me^{\lambda s} - 1)^2} \stackrel{H}{=} \lim_{s\to0} \frac{\lambda \me^{\lambda s}(2\me^{\lambda s}-\lambda s-2)}{2\lambda \me^{\lambda s}(\me^{\lambda s} - 1)} \stackrel{H}{=} \lim_{s\to0}\frac{2\lambda \me^{\lambda s} - \lambda}{2\lambda \me^{\lambda s}} = \frac{1}{2} > 0$,
	\item $\frac{\mathrm{d}}{\mathrm{d}s}(g(s)) = \frac{\me^{\lambda s}(\me^{\lambda s}-\lambda s-1)}{(\me^{\lambda s} - 1)^2} =  \frac{\me^{\lambda s}\sum_{k=2}^{\infty}{(\lambda s)^k\over k!}}{(\me^{\lambda s} - 1)^2} > 0$, \quad for any $s>0$.
\end{itemize}
All of the above imply that $g$ is a positive, strictly increasing and unbounded function on $\mathbb{R}_{\geq0}$.

Similarly, the properties of $h$ can be listed as follows:
\begin{itemize}
	\item $h(0) = \lim_{s\to0}\frac{s}{\me^{\lambda s}-1} \stackrel{H}{=} \lim_{s\to0} \frac{1}{\lambda \me^{\lambda s}} = \frac{1}{\lambda} >0$,
	\item$\lim_{s\to\infty} h(s) = \lim_{s\to\infty} \frac{s}{\me^{\lambda s}-1} \stackrel{H}{=} \lim_{s\to\infty} \frac{1}{\lambda \me^{\lambda s}} = 0$,
	\item $\frac{\mathrm{d}}{\mathrm{d}s}(h(s))|_{s=0} = \lim_{s\to0} -\frac{\me^{\lambda s}(\lambda s -1)+1}{(\me^{\lambda s}-1)^2} \stackrel{H}{=} \lim_{s\to0} {-\lambda^2 s \me^s \over 2(\me^{\lambda s}-1)\lambda \me^{\lambda s}} \stackrel{H}{=} \lim_{s\to0} -\frac{\lambda}{2\lambda\me^{\lambda s}} = -\frac{1}{2} <0$,
	\item $\frac{\mathrm{d}}{\mathrm{d}s}(h(s)) = -\frac{\me^{\lambda s}(\lambda s -1)+1}{(\me^{\lambda s}-1)^2}\stackrel{(a)}{<}0$, \quad for any $s>0,\\$
	where (a) follows from the fact that $\frac{\mathrm{d}}{\mathrm{d}s}(-\me^{\lambda s}(\lambda s -1)-1) = -\lambda^2s\me^{\lambda s}<0$ for any $s>0$, and $-\me^0(\lambda (0) - 1) - 1 = 0$.
\end{itemize}

All of the above imply that $h$ is a positive and strictly decreasing function on $\mathbb{R}_{\geq0}$.

Further, notice the following properties:
\begin{align}
	\begin{split}
		g(l_k)&=h(l_k)+l_k \,,\\
		g(l_k)&=l_{k}{\me^{\lambda l_{k}}\over\me^{\lambda l_k}-1}>l_k \,,\\
		h(l_k) &= {l_{k}\over\me^{\lambda l_{k}}-1} = {l_{k}\over\sum_{k=0}^{\infty}{(\lambda l_{k})^k\over k!}-1} = {l_{k}\over\lambda l_{k}+\sum_{k=2}^{\infty}{(\lambda l_{k})^k\over k!}}<{l_k\over\lambda l_k} = {1\over\lambda} \,.
		\label{eq:gh_Properties}
	\end{split}
\end{align}

Now consider the length of the $(N-2)$-nd bin. By utilizing the properties in \eqref{eq:gh_Properties} on the recursion in \eqref{eq:compactRecursion},
\begin{align}
	g(l_{N-2}) &= {2\over\lambda}+2b-h(l_{N-1}) = g(l_{N-1})-h(l_{N-1}) = l_{N-1} \nn\\ &\Rightarrow l_{N-1} = g(l_{N-2}) =l_{N-2}+h(l_{N-2}) \nn\\
	&\Rightarrow  l_{N-2} < l_{N-1} < l_{N-2}+{1\over\lambda}
\end{align}
is obtained. Similarly, for the $(N-3)$-rd bin, the following relations hold:
\begin{align}
	g(l_{N-3}) &= {2\over\lambda}+2b-h(l_{N-2}) = g(l_{N-2})+h(l_{N-1})-h(l_{N-2}) = l_{N-2}+h(l_{N-1}) \nn\\
	&g(l_{N-3})=l_{N-2}+h(l_{N-1})<l_{N-2}+h(l_{N-2})=g(l_{N-2}) \Rightarrow l_{N-3}<l_{N-2}\nn\\
	&l_{N-2}<l_{N-2}+h(l_{N-1})=g(l_{N-3})=l_{N-3}+h(l_{N-3})<l_{N-3}+{1\over\lambda}  \nn\\
	&\Rightarrow  l_{N-3} < l_{N-2} < l_{N-3}+{1\over\lambda} \,.
\end{align}
Now suppose that $l_{N-1}>l_{N-2}>\ldots>l_{k}$ is obtained. Then, consider the $(k-1)$-st bin:
\begin{align}
	g(l_{k-1}) &= {2\over\lambda}+2b-h(l_{k}) = g(l_{k})+h(l_{k+1})-h(l_{k}) = l_{k}+h(l_{k+1}) \nn\\
	&g(l_{k-1})=l_{k}+h(l_{k+1})<l_{k}+h(l_{k})=g(l_{k}) \Rightarrow l_{k-1}<l_{k}\nn\\
	&l_{k}<l_{k}+h(l_{k+1})=g(l_{k-1})=l_{k-1}+h(l_{k-1})<l_{k-1}+{1\over\lambda}  \nn\\
	&\Rightarrow  l_{k-1} < l_{k} < l_{k-1}+{1\over\lambda} \,.
\end{align}
Thus, the bin-lengths form a monotonically increasing sequence. 

\underline{Part-III: Number of Bins} : Consider the length of the $(N-1)$-st bin: Notice that, in \eqref{eq:bin_N-1_lambert}, since $b>0$, $-(2+2\lambda b)<-2<-1$ holds, and by the Lambert $W$-function, $t=W_0\left(-(2+2\lambda b)\me^{-(2+2\lambda b)}\right)$ such that $t\me^t=-(2+2\lambda b)\me^{-(2+2\lambda b)}$ and $-1<t<0$, which result in ${2\over\lambda}+2b>l_{N-1}>{1\over\lambda}+2b>{1\over\lambda}$; i.e., the $(N-1)$-st bin has a positive length. 

For the other bins, since $c_{k}={2\over\lambda}+2b-{l_{k+1}\over\me^{\lambda l_{k+1}}-1}={2\over\lambda}+2b-h(l_{k+1})>{2\over\lambda}+2b-{1\over\lambda}={1\over\lambda}+2b$ for $b>0$, we have $-\lambda c_k=-1-2\lambda b<-1$, which implies that $W_0\left(-\lambda c_k\me^{-\lambda c_k}\right)$ has a solution $t$ such that $t\me^t=-\lambda c_k\me^{-\lambda c_k}$ and $-1<t<0$. Hence, from \eqref{eq:bin_k_lambert}, $l_k={1\over\lambda}W_0\left(-\lambda c_{k}\me^{-\lambda c_{k}}\right)+c_{k}>{1\over\lambda}(-1)+{1\over\lambda}+2b=2b>0$ is obtained. This means that, for any given number of bins $N$, when $b>0$, an equilibrium with positive bin-lengths $l_1,l_2,\ldots,l_{N-2}$ is obtained.

To summarize the results, for every $N\in\mathbb{N}$ with $l_N=\infty$, there exists a solution $l_1, l_2,\ldots,l_N$ so that
\begin{enumerate}
	\item[(i)] these construct a unique equilibrium,
	\item[(ii)] each of these are non-zero,
	\item[(iii)] these form a monotonically increasing sequence.
\end{enumerate} 
\end{proof}

By following an approach similar to that in Theorem~\ref{thm:expPosBias}, the bounds in Proposition~\ref{prop1} can be refined as follows:
\begin{cor}
	\begin{itemize}
		\item [(i)] There exists an equilibrium with at least two bins if and only if $b>-{1\over2\lambda}$.
		\item [(ii)] There exists an equilibrium with at least three bins if and only if $b>-{1\over2\lambda}{\me-2\over\me-1}$.
	\end{itemize}
\end{cor}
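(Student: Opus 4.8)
For (ii) I would first dispose of (i): it merely restates the earlier theorem (an equilibrium with at least two bins exists iff $b>-\frac1{2\lambda}$), but it is worth re-deriving it inside the framework of Theorem~\ref{thm:expPosBias} so that (ii) reads as a continuation. In \emph{any} equilibrium with at least two bins the last bin is $[\mm_{N-1},\infty)$, and \eqref{eq:bin_N-1} forces $g(l_{N-1})=\frac{2}{\lambda}+2b$ with $g(s)=\frac{s\me^{\lambda s}}{\me^{\lambda s}-1}$; since, by the proof of Theorem~\ref{thm:expPosBias}, $g$ is a strictly increasing bijection of $[0,\infty)$ onto $[\frac1\lambda,\infty)$, a \emph{positive} $l_{N-1}$ exists iff $\frac{2}{\lambda}+2b>\frac1\lambda$, i.e.\ iff $b>-\frac1{2\lambda}$; conversely, for $b>-\frac1{2\lambda}$ the configuration $\mm_0=0$, $\mm_1=g^{-1}\!\left(\frac{2}{\lambda}+2b\right)$, $\mm_2=\infty$ with centroids \eqref{centroid} satisfies \eqref{centroid}--\eqref{eq:centroidBoundaryEq}, hence is an equilibrium.

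For (ii) the plan is to run the $N=3$ instance of \eqref{eq:compactRecursionLast}--\eqref{eq:compactRecursion}, keeping track of the fact that $b$ may now be negative (so the $b>0$ sign arguments of Part~III of Theorem~\ref{thm:expPosBias} must be replaced by explicit positivity thresholds). With $l_3=\infty$ the equations become $g(l_2)=\frac{2}{\lambda}+2b$ and $g(l_1)=\frac{2}{\lambda}+2b-h(l_2)$, where $h(s)=\frac{s}{\me^{\lambda s}-1}$ is, again by the proof of Theorem~\ref{thm:expPosBias}, strictly decreasing on $[0,\infty)$ with $h(0)=\frac1\lambda$ and $g(s)-h(s)=s$. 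Hence $l_2$ is uniquely $l_2(b)=g^{-1}\!\left(\frac{2}{\lambda}+2b\right)$, positive iff $b>-\frac1{2\lambda}$, and given $l_2$ the length $l_1$ is uniquely determined and positive iff $\frac{2}{\lambda}+2b-h(l_2(b))>g(0)=\frac1\lambda$. Since the same two equations also govern $l_{N-1},l_{N-2}$ in \emph{any} equilibrium with $N\ge3$ bins, an equilibrium with at least three bins exists iff $\phi(b)\triangleq\frac1\lambda+2b-h(l_2(b))>0$, and the explicit three-bin configuration $\mm_0=0,\mm_1=l_1,\mm_2=l_1+l_2,\mm_3=\infty$ realizes it when it holds.

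It then remains to pin down the threshold. Because $b\mapsto l_2(b)$ is strictly increasing and $h$ strictly decreasing, $\phi$ is strictly increasing on its natural domain $b\ge-\frac1{2\lambda}$, so there is a unique $b^{*}$ with $\phi(b^{*})=0$ and $\phi(b)>0\iff b>b^{*}$ (and $b^{*}>-\frac1{2\lambda}$, so $l_2(b)>0$ whenever $\phi(b)>0$). At $b=b^{*}$ the vanishing bin gives $l_1=0$, hence $g(l_2)=\frac{2}{\lambda}+2b^{*}$ and $h(l_2)=\frac1\lambda+2b^{*}$ hold simultaneously; subtracting and using $g-h=\mathrm{id}$ gives $l_2=\frac1\lambda$, and then $h\!\left(\frac1\lambda\right)=\frac{1/\lambda}{\me-1}=\frac1\lambda+2b^{*}$ yields $b^{*}=-\frac1{2\lambda}\cdot\frac{\me-2}{\me-1}$ (with $-\frac1{2\lambda}<b^{*}<0$ since $\frac{\me-2}{\me-1}\in(0,1)$, consistent with (i) and with Proposition~\ref{prop1}).

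The main obstacle is essentially bookkeeping rather than a new idea: everything is downstream of Theorem~\ref{thm:expPosBias}. The two points that require care are (a) that its recursion and solvability analysis survive for negative $b$ once the $b>0$ sign arguments are replaced by the explicit requirements $l_2>0$ and $l_1>0$, and (b) the observation that at the critical bias the degenerating bin $l_1\to0$ forces $l_2=\frac1\lambda$ exactly through $g-h=\mathrm{id}$, which is what collapses the Lambert-$W$ formula \eqref{eq:bin_k_lambert} into the clean threshold $-\frac1{2\lambda}\cdot\frac{\me-2}{\me-1}$. The strict monotonicity of $\phi$, which upgrades the computation to an ``if and only if,'' is routine given the monotonicity of $g$ and $h$ established in the proof of Theorem~\ref{thm:expPosBias}.
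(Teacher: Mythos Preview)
Your argument is correct and follows the same architecture as the paper's: both reduce part~(ii) to the positivity of $l_{N-2}$ in the recursion of Theorem~\ref{thm:expPosBias}, which (via $g(l_{N-2})=\tfrac{2}{\lambda}+2b-h(l_{N-1})=g(l_{N-1})-h(l_{N-1})=l_{N-1}$) is exactly the condition $l_{N-1}>\tfrac{1}{\lambda}$. The only methodological difference lies in how this inequality is converted into a threshold on $b$. The paper inverts it through the explicit Lambert-$W$ formula \eqref{eq:bin_N-1_lambert}, using monotonicity of $t\mapsto t\me^t$ on $(-1,\infty)$ to obtain $-(2+2\lambda b)\me^{-(2+2\lambda b)}>-(1+2\lambda b)\me^{-(1+2\lambda b)}$ and then simplify. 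You instead argue that the map $b\mapsto l_{N-1}(b)=g^{-1}(\tfrac{2}{\lambda}+2b)$ is strictly increasing, so the threshold is the $b^{*}$ at which $l_{N-1}=\tfrac{1}{\lambda}$; plugging into $g(\tfrac{1}{\lambda})=\tfrac{2}{\lambda}+2b^{*}$ (or, equivalently, using $g-h=\mathrm{id}$ and $h(\tfrac{1}{\lambda})=\tfrac{1}{\lambda(\me-1)}$) gives $b^{*}=-\tfrac{1}{2\lambda}\cdot\tfrac{\me-2}{\me-1}$ directly. Your route is a bit more elementary in that it never invokes the Lambert function, while the paper's route has the advantage of tying back to the closed-form \eqref{eq:bin_N-1_lambert}; substantively the two are the same computation viewed from opposite ends.
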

\begin{proof}
\begin{enumerate}[(i)]
		\item In order to have an equilibrium with at least 2 bins, $l_{N-1}>0$ must be satisfied. From \eqref{eq:bin_N-1_lambert}, if $-(2+2\lambda b)<-1$ is satisfied, then the solution $l_{N-1}$ will be positive. Thus, if $b>-{1\over2\lambda}$, an equilibrium with at least 2 bins is obtained; otherwise; i.e., $b\leq-{1\over2\lambda}$, there exists only one bin at the equilibrium.
		\item In order to have an equilibrium with at least 3 bins, $l_{N-2}>0$ must be satisfied. From \eqref{eq:bin_k_lambert}, if $-\lambda c_{N-2}<-1$ is satisfied, then the solution to $l_{N-2}$ will be positive. Then,
		\begin{align}
			-\lambda c_{N-2} =& -\lambda\left({2\over\lambda}+2b-h(l_{N-1})\right)=-\lambda\left(g(l_{N-1})-h(l_{N-1})\right)=-\lambda l_{N-1} <-1 \nn\\
			\Rightarrow & l_{N-1}={1\over\lambda}W_0\left(-(2+2\lambda b)\me^{-(2+2\lambda b)}\right)+2\left({1\over\lambda}+b\right)>{1\over\lambda} \nn\\
			\Rightarrow & W_0\left(-(2+2\lambda b)\me^{-(2+2\lambda b)}\right) > -1-2\lambda b \,.
		\label{eq:threeBinsanalysis}
		\end{align}
		Let $t\triangleq W_0\left(-(2+2\lambda b)\me^{-(2+2\lambda b)}\right)$, then $t\me^t=-(2+2\lambda b)\me^{-(2+2\lambda b)}$ and $-1<t<0$. Then, from \eqref{eq:threeBinsanalysis}, since $t\me^t$ is increasing function of $t$ for $t>-1$,
		\begin{align}
			t>&-1-2\lambda b \Rightarrow t\me^t=-(2+2\lambda b)\me^{-(2+2\lambda b)} > -(1+2\lambda b)\me^{-(1+2\lambda b)} \nn\\
			\Rightarrow & 2+2\lambda b < (1+2\lambda b)\me \Rightarrow b > -{1\over2\lambda}{\me-2\over\me-1} \,.
		\end{align}
		Thus, if $b>-{1\over2\lambda}{\me-2\over\me-1}$, an equilibrium with at least three bins is obtained; otherwise; i.e., $b\leq-{1\over2\lambda}{\me-2\over\me-1}$, there can exist at most two bins at the equilibrium.
	\end{enumerate}
\end{proof}

Theorem~\ref{thm:expPosBias} shows that, when $b>0$, there exists an equilibrium with $N$ bins for any finite $N\in\mathbb{N}$. The following shows the existence of equilibria with infinitely many bins:
\begin{thm} \label{thm:expInfinite}
	For an exponential source and positive bias $b>0$, there exist equilibria with infinitely many bins. In particular, all bins must have a length of $l^*$, where $l^*$ is the solution to $g(l^*) = {2\over\lambda}+2b-h(l^*)$.
\end{thm}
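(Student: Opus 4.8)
The plan is threefold: (I) locate the root $l^*$; (II) show that \emph{any} equilibrium with infinitely many bins has all bin-lengths equal to $l^*$; and (III) verify that the partition into bins of common length $l^*$ is a genuine Nash equilibrium. For (I), recall from \eqref{eq:gh_Properties} that $g(l)=l+h(l)$, so the defining equation $g(l^*)=\tfrac2\lambda+2b-h(l^*)$ is equivalent to $\phi(l^*)=\tfrac2\lambda+2b$, where $\phi(l):=g(l)+h(l)=l+2h(l)$. Using $h(0^+)=\tfrac1\lambda$ and $h(l)\to0$ (established in the proof of Theorem~\ref{thm:expPosBias}) one gets $\phi(0^+)=\tfrac2\lambda$ and $\phi(l)\to\infty$; since $b>0$ makes $\tfrac2\lambda+2b>\tfrac2\lambda$, the existence of a unique $l^*>0$ follows once $\phi$ is shown strictly increasing on $(0,\infty)$. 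Writing $x=\lambda l$ and clearing denominators, $\phi'(l)>0$ reduces to the elementary inequality $\me^{2x}-2x\me^{x}-1>0$ for $x>0$, which holds because the left-hand side vanishes at $x=0$ and has derivative $2\me^{x}(\me^{x}-1-x)>0$.

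For (II), take any equilibrium with infinitely many bins, indexed $k=1,2,\dots$ with $\mm_0=0$ and $\mm_k\to\infty$. Since no bin is the last one, every bin is interior, so the derivation of \eqref{eq:bin_k} applies for \emph{all} $k\ge1$, giving the recursion $g(l_k)=\tfrac2\lambda+2b-h(l_{k+1})$ for every $k$. As $0<h(l_{k+1})<\tfrac1\lambda$, this forces $g(l_k)\in(\tfrac1\lambda+2b,\tfrac2\lambda+2b)$; because $g$ is a continuous strictly increasing bijection of $[0,\infty)$ onto $[\tfrac1\lambda,\infty)$ (proof of Theorem~\ref{thm:expPosBias}), the sequence $(l_k)$ lies in a fixed interval $[a,c]$ with $0<a<c<\infty$, i.e. it is bounded away from $0$ and $\infty$. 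Suppose now $l_{k_0}\ne l^*$ for some $k_0$; say $l_{k_0}>l^*$ (the case $l_{k_0}<l^*$ is symmetric). Then $\phi(l_{k_0})>\phi(l^*)=\tfrac2\lambda+2b=g(l_{k_0})+h(l_{k_0+1})$, and since $\phi(l_{k_0})=g(l_{k_0})+h(l_{k_0})$ this yields $h(l_{k_0})>h(l_{k_0+1})$, hence $l_{k_0}<l_{k_0+1}$ because $h$ is strictly decreasing; the same step applied inductively (each new term still exceeds $l^*$) shows $(l_k)_{k\ge k_0}$ is strictly increasing. Being bounded above by $c$, it converges to a finite limit $L\ge l_{k_0}>l^*$, and letting $k\to\infty$ in $g(l_k)=\tfrac2\lambda+2b-h(l_{k+1})$ gives $\phi(L)=\tfrac2\lambda+2b$, i.e. $L=l^*$ — a contradiction. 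Hence $l_k=l^*$ for all $k$. (Equivalently, $l^*$ is the attracting fixed point of this recursion, and the bin-lengths of the finite-$N$ equilibria of Theorem~\ref{thm:expPosBias}, read from the bottom bin, all converge to $l^*$ as $N\to\infty$.)

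For (III), take the partition $\mm_k=k\,l^*$ for $k=0,1,2,\dots$ and set $u_k=\mathbb{E}[M\mid\mm_{k-1}\le M<\mm_k]$, which satisfies \eqref{centroid} by construction. By Fact~\ref{fact:exponential}, $u_k=\tfrac1\lambda+(k-1)l^*-h(l^*)$, which is increasing in $k$. A direct computation then gives $\tfrac{u_k+u_{k+1}}{2}+b=k\,l^*+\left(\tfrac1\lambda+b-\tfrac{l^*}{2}-h(l^*)\right)$, and the bracketed term vanishes precisely because $l^*+2h(l^*)=\tfrac2\lambda+2b$; hence $\tfrac{u_k+u_{k+1}}{2}+b=\mm_k$ for every $k\ge1$, so \eqref{eq:centroidBoundaryEq} holds. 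Since the $u_k$ are increasing, the nearest-neighbor partition they induce for the encoder has exactly the boundaries $\mm_k$, so this pair of policies is a Nash equilibrium (decoder plays centroids, encoder plays the nearest-neighbor partition), exactly as in the finite-bin constructions, and it has infinitely many bins.

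I expect the main obstacle to be part (II), and within it the strict monotonicity of $\phi=g+h$ on $(0,\infty)$ — equivalently $h'(l)>-\tfrac12$, i.e. the inequality $\me^{2x}-2x\me^{x}-1>0$. This is what makes $l^*$ the unique level at which the recursion is stationary and forces the bounded monotone sequence $(l_k)$ to be constant; the weaker bound $g'>0$ (equivalently $h'>-1$) already used in Theorem~\ref{thm:expPosBias} does not suffice. A secondary point is to fix the precise meaning of an ``equilibrium with infinitely many bins'' (one-sided indexing with $\mm_0=0$ and $\mm_k\to\infty$, other configurations being ruled out for $b>0$) so that the interior-bin recursion \eqref{eq:bin_k} is available for every $k$.
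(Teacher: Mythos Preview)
Your proposal is correct and follows essentially the same strategy as the paper: both hinge on the inequality $h'(l)>-\tfrac12$ (your $\me^{2x}-2x\me^{x}-1>0$ is exactly this after clearing denominators), use it to show that the fixed point $l^*$ is unique and that any infinite-bin sequence $(l_k)$ departing from $l^*$ is strictly monotone, and then derive a contradiction.

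There is one mild methodological difference worth noting. For uniqueness of $l^*$, the paper rewrites the fixed-point equation as $\Psi(s)=(c-s)\me^{\lambda s}-(c+s)=0$ and shows $\Psi$ is concave with a sign change on $(2b,\tfrac2\lambda+2b)$; you instead show directly that $\phi=g+h$ is strictly increasing, which is cleaner and simultaneously gives the monotonicity step in (II). For the contradiction in (II), the paper argues that the monotone sequence eventually leaves the feasible interval $(\underline{l},\overline{l})$ on which the recursion $g(l_k)=\tfrac2\lambda+2b-h(l_{k+1})$ can be solved, so no further bin exists; you instead use boundedness to get convergence to a limit $L$, pass to the limit in the recursion to obtain $\phi(L)=\phi(l^*)$, and contradict $L\neq l^*$. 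Your route is slightly more economical (no need to identify $\underline{l},\overline{l}$), while the paper's argument yields the extra quantitative conclusion $2b<l^*\le l_k<\tfrac2\lambda+2b$ for any finite-bin equilibrium, which is later used in Theorem~\ref{thm:expMoreInformative}. Your explicit verification (III) that the equal-length partition satisfies \eqref{centroid}--\eqref{eq:centroidBoundaryEq} is a welcome addition; the paper asserts this but does not spell it out.
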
	
\begin{proof}
For any equilibrium, consider a bin with a finite length, let's say the $k$-th bin, and by utilizing \eqref{eq:compactRecursion} and \eqref{eq:gh_Properties}, we have the following inequalities:
\begin{align*}
	{2\over\lambda}+2b =& g(l_k)+h(l_{k+1}) = g(l_k) + g(l_{k+1}) - l_{k+1} > l_k +l_{k+1} - l_{k+1} = l_k \Rightarrow l_k < {2\over\lambda}+2b \,,\nn\\
	{2\over\lambda}+2b =& g(l_k)+h(l_{k+1}) = h(l_k)+l_k +h(l_{k+1}) < {1\over\lambda} + l_k + {1\over\lambda} = {2\over\lambda} + l_k 
	\Rightarrow l_k > 2b \,.
\end{align*}
Thus, all bin-lengths are bounded from above and below: $2b < l_k < {2\over\lambda}+2b$. Now consider the fixed-point solution of the recursion in \eqref{eq:compactRecursion}; i.e., $g(l^*)={2\over\lambda}+2b-h(l^*)$. Then, by letting $c\triangleq{2\over\lambda}+2b$,
\begin{align}
	l^*{\me^{\lambda l^*}\over\me^{\lambda l^*}-1} = c - {l^*\over\me^{\lambda l^*}-1} \Rightarrow l^*{\me^{\lambda l^*}+1\over\me^{\lambda l^*}-1} = c \Rightarrow (c-l^*)\me^{\lambda l^*}-(c+l^*)=0 \,.
	\label{eq:expInfFixedEq}
\end{align}
In order to investigate if \eqref{eq:expInfFixedEq} has a unique solution $l^*$ such that $2b<l^*<{2\over\lambda}+2b$, let $\Psi(s)\triangleq (c-s)\me^{\lambda s}-(c+s)$ for $s\in\left(2b,{2\over\lambda}+2b\right)$ and notice the following properties:
\begin{itemize}
	\item $\Psi(2b) = {2\over\lambda}\me^{2\lambda b}-\left({2\over\lambda}+4b\right) = {2\over\lambda}\left(\me^{2\lambda b}-1-2\lambda b\right)={2\over\lambda}\left(1+2\lambda b + \sum\limits_{k=2}^{\infty}{(2\lambda b)^k\over k!}-1-2\lambda b\right) = {2\over\lambda}\left( \sum\limits_{k=2}^{\infty}{(2\lambda b)^k\over k!}\right)>0$,
	\item $\Psi({2\over\lambda}+2b) = 0\times\me^{2+2\lambda b}-\left({4\over\lambda}+4b\right) = -\left({4\over\lambda}+4b\right)<0$,
	\item $\Psi^\prime(s)=\frac{\mathrm{d}}{\mathrm{d}s}(\Psi(s)) = \me^{\lambda s}\left(\lambda(c-s)-1\right)-1$,
	\item $\Psi^\prime(2b) = \me^{2\lambda b}\left(\lambda({2\over\lambda})-1\right)-1=\me^{2\lambda b}-1>0$,
	\item $\Psi^\prime({2\over\lambda}+2b) = \me^{2+2\lambda b}\left(\lambda\times0-1\right)-1=-\me^{2+2\lambda b}-1<0$,
	\item $\Psi^{\prime\prime}(s)=\frac{\mathrm{d}}{\mathrm{d}s}(\Psi^\prime) = \lambda\me^{\lambda s}\left(\lambda(c-s)-2\right)$, since $s\in\left(2b,{2\over\lambda}+2b\right)$ and $c={2\over\lambda}+2b$, we have $0<c-s<{2\over\lambda}\Rightarrow -2<\lambda(c-s)-2<0 \Rightarrow \Psi^{\prime\prime}(s)>0$.
\end{itemize}
All of the above implies that $\Psi(s)$ is a concave function of $s$ for $s\in\left(2b,{2\over\lambda}+2b\right)$, $\Psi(2b)>0$, $\Psi(s)$ reaches its maximum value on the interval $\left(2b,{2\over\lambda}+2b\right)$; i.e., when $\Psi^\prime(s^*)=0$, and $\Psi({2\over\lambda}+2b)<0$; thus, $\Psi(s)$ crosses the $s$-axis only once, which implies that  $\Psi(s)=0$ has a unique solution on the interval $\left(2b,{2\over\lambda}+2b\right)$. In other words, the fixed-point solution of the recursion in \eqref{eq:compactRecursion} is unique; i.e., $\Psi(l^*)=0$.

Hence, if the length of the first bin is $l^*$; i.e., $l_1=l^*$, then, all bins must have a length of $l^*$; i.e., $l_1=l_2=l_3=\ldots = l^*$. Thus, there exist equilibria with infinitely many equi-length bins.

Now, suppose that $l_1<l^*$. Then, by \eqref{eq:compactRecursion}, $h(l_2) = {2\over\lambda}+2b - g(l_1)$. Since $g$ in an increasing function, $l_1<l^*\Rightarrow g(l_1)<g(l^*)$. Let $g(l^*)-g(l_1)\triangleq\Delta>0$, then,
\begin{align}
	g(l^*) + h(l^*) = g(l_1) + h(l_2) = {2\over\lambda}+2b \Rightarrow \Delta = g(l^*)-g(l_1) = h(l_2) - h(l^*) \,.
	\label{eq:l_1SmallCase}
\end{align}
From Theorem~\ref{thm:expPosBias}, we know that $h(s)={s\over\me^{\lambda s}-1}$ is a decreasing function with $h^\prime(s)=-\frac{\me^{\lambda s}(\lambda s -1)+1}{(\me^{\lambda s}-1)^2}<0$ for $s>0$ and  $h^\prime(0)=-{1\over2}$. Slightly changing the notation, let $\widetilde{h}^\prime(s)=h^\prime({s\over\lambda})$; i.e., $\widetilde{h}^\prime(s)=\frac{\me^s-1-s\me^s}{(\me^s-1)^2}$. Then, $\widetilde{h}^{\prime\prime}(s)=\frac{\mathrm{d}}{\mathrm{d}s}(\widetilde{h}^\prime(s)) = -\frac{\me^s(\me^s-1)(2\me^s-s\me^s-s-2)}{(\me^s-1)^4}$. Now, let $\varrho(s)\triangleq2\me^s-s\me^s-s-2$, and observe the following:
\begin{align}
	\varrho(s) &= 2\me^s-s\me^s-s-2 \Rightarrow \varrho(0)=0\,,\nn\\
	\varrho^\prime(s)&=\frac{\mathrm{d}}{\mathrm{d}s}(\varrho(s)) = \me^s-s\me^s-1 \Rightarrow \varrho^\prime(0)=0\,, \nn\\
	\varrho^{\prime\prime}(s)&=\frac{\mathrm{d}}{\mathrm{d}s}(\varrho^\prime(s)) = -s\me^s \leq 0 \Rightarrow \varrho^{\prime\prime}(0)=0 \nn\\
	& \Rightarrow \varrho^\prime(s)<0 \text{ for } s>0 \Rightarrow \varrho(s)<0 \text{ for } s>0 \Rightarrow \widetilde{h}^{\prime\prime}(s)>0 \text{ for } s>0 \,.
	\label{eq:rhoNegative}
\end{align}
Thus, $\widetilde{h}^\prime(s)$ is an increasing function, which implies that $h^\prime(s)$ is also an increasing function. Since $h^\prime(0)=-{1\over2}$, $h^\prime(s)>-{1\over2}$ for $s>0$, it follows that ${h(l^*)-h(l_2)\over l^*-l_2}>-{1\over2} \Rightarrow {-\Delta\over l^*-l_2}>-{1\over2} \Rightarrow l^*-l_2>2\Delta$. From \eqref{eq:l_1SmallCase},
\begin{align}
	\Delta + \Delta &= (g(l^*)-g(l_1)) + (h(l_2) - h(l^*)) = g(l^*)-g(l_1) + (g(l_2)-l_2) - (g(l^*)-l^*) \nn\\
	&= g(l_2) - g(l_1) + \underbrace{l^* - l_2}_{>2\Delta} \Rightarrow g(l_2) - g(l_1) <0 \Rightarrow l_2<l_1 \,.
\end{align}
Proceeding similarly, $l^*>l_1>l_2>\ldots$ can be obtained. Now, notice that, since $h(l_k)$ is a monotone function and $2b < l_k < {2\over\lambda}+2b$, the recursion in \eqref{eq:compactRecursion} can be satisfied if 
\begin{align}
	g(l_k) = {2\over\lambda}+2b-h(l_{k+1}) \Rightarrow {2\over\lambda}+2b-h(2b) < g(l_k) < {2\over\lambda}+2b-h\left({2\over\lambda}+2b\right) \,.
\end{align}
Let $\underline{l}$ and $\overline{l}$ and defined as $g(\underline{l})={2\over\lambda}+2b-h(2b)$ and $g(\overline{l})={2\over\lambda}+2b-h\left({2\over\lambda}+2b\right)$, respectively. Thus, if $l_k \notin (\underline{l},\overline{l})$, then there is no solution to $l_{k+1}$ for the recursion in \eqref{eq:compactRecursion}. Since the sequence of bin-lengths is monotonically decreasing, there is a natural number $K$ such that $l_K>\underline{l}$ and $l_{K+1}\leq\underline{l}$, which implies that there is no solution to $l_{K+2}$. Thus, there cannot be any equilibrium with infinitely many bins if $l_1<l^*$.

A similar approach can be taken for $l_1>l^*$: Since $g$ is an increasing function, $l_1>l^*\Rightarrow g(l_1)>g(l^*)$. Let $g(l_1)-g(l^*)\triangleq\widetilde{\Delta}>0 \Rightarrow g(l_1)-g(l^*) = h(l^*) - h(l_2) = \widetilde{\Delta}$. Then, since $h^\prime(s)>-{1\over2}$ for $s>0$, ${h(l_2)-h(l^*)\over l_2-l^*}>-{1\over2} \Rightarrow {-\Delta\over l_2-l^*}>-{1\over2} \Rightarrow l_2-l^*>2\Delta$. From \eqref{eq:l_1SmallCase},
\begin{align}
	\widetilde{\Delta} + \widetilde{\Delta} &= (g(l_1)-g(l^*)) + (h(l^*) - h(l_2)) = g(l_1)-g(l^*) + (g(l^*)-l^*) - (g(l_2)-l_2) \nn\\
	&= g(l_1) - g(l_2) + \underbrace{l_2 - l^*}_{>2\Delta} \Rightarrow g(l_1) - g(l_2) <0 \Rightarrow l_1<l_2 \,.
\end{align}
Proceeding similarly, $l^*<l_1<l_2<\ldots$ can be obtained. Since the sequence of bin-lengths is monotonically increasing, there is a natural number $\widetilde{K}$ such that $l_{\widetilde{K}}<\overline{l}$ and $l_{\widetilde{K}+1}\geq\overline{l}$, which implies that there is no solution to $l_{\widetilde{K}+2}$. Thus, there cannot be any equilibrium with infinite number of bins if $l_1>l^*$. Notice that, it is possible to have an equilibrium with finite number of bins since for the last bin with a finite length, \eqref{eq:compactRecursionLast} is utilized. Further, it is shown that, at the equilibrium, any finite bin-length must be greater than or equal to $l^*$; i.e., $2b<l^*\leq l_k<{2\over\lambda}+2b$ must be satisfied.
\end{proof}

So far, we have shown that at the equilibrium there is an upper bound on the number of bins when $b<0$; i.e., there can exist only finitely many equilibria with finitely many bins. On the other hand, when $b>0$, there is no upper bound on the number of bins at the equilibrium, and even there exist equilibria with infinitely many bins. Therefore, at this point, it is interesting to examine which equilibrium is preferred by the decision makers; i.e., which equilibrium is more informative (has smaller cost).
\begin{thm} \label{thm:expMoreInformative}
	The most informative equilibrium is reached with the maximum possible number of bins:
	\begin{itemize}
		\item [(i)] for $b<0$, if there are two different equilibria with $K$ and $N$ bins where $N>K$, the equilibrium with $N$ bins is more informative.
		\item [(ii)] for $b>0$, the equilibria with infinitely many bins are the most informative ones.
	\end{itemize} 
\end{thm}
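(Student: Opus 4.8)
The plan is to write $J^{d,N}$ as a probability-weighted sum of truncated variances and then compare consecutive equilibria by exploiting the memorylessness of the exponential law. Recall that
\[
J^{d,N}=\sum_{i=1}^{N}\mathrm{Pr}(\mm_{i-1}<M<\mm_i)\,\mathrm{Var}\!\bigl(M\mid \mm_{i-1}\le M<\mm_i\bigr),
\]
where, by Fact~\ref{fact:exponential}, a finite bin of length $l$ contributes $V(l):=\frac1{\lambda^2}-\frac{l^2}{\me^{\lambda l}+\me^{-\lambda l}-2}$ while the unbounded last bin contributes $\mathrm{Var}(M\mid M\ge \mm_{N-1})=\frac1{\lambda^2}$. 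First I would record that $V$ is strictly increasing on $(0,\infty)$, with $V(0^+)=0$ and $\lim_{l\to\infty}V(l)=\frac1{\lambda^2}$: putting $x=\lambda l$, one has $\tfrac{2(\cosh x-1)}{x^2}=\sum_{m\ge0}\tfrac{2\,x^{2m}}{(2m+2)!}$, which is strictly increasing in $x>0$, so $V(l)=\frac1{\lambda^2}\bigl(1-\bigl(\tfrac{2(\cosh\lambda l-1)}{(\lambda l)^2}\bigr)^{-1}\bigr)$ is strictly increasing; in particular every finite bin has variance strictly below $\frac1{\lambda^2}$.

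Next I would exploit the fact that equilibria with different numbers of bins are \emph{nested}. For any $b$, the lengths at an $N$-bin equilibrium satisfy the recursion \eqref{eq:compactRecursionLast}--\eqref{eq:compactRecursion}, and $g$ is a strictly increasing bijection of $[0,\infty)$ onto $[\tfrac1\lambda,\infty)$; since $g(l_{N-1})=\tfrac2\lambda+2b$ is the same equation for every $N$, the last finite bin length is common to all equilibria, and solving \eqref{eq:compactRecursion} downward shows that the $(N{+}1)$-bin equilibrium arises from the $N$-bin equilibrium by inserting one extra bin at the origin: $l^{(N+1)}_i=l^{(N)}_{i-1}$ for $i=2,\dots,N$, with a new shortest bin $\mm'_1:=l^{(N+1)}_1$. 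Writing $q=\me^{-\lambda\mm'_1}\in(0,1)$, the part of the $(N{+}1)$-bin partition contained in $[\mm'_1,\infty)$ is, after translation by $\mm'_1$, exactly the $N$-bin partition of $[0,\infty)$; since $M$ conditioned on $\{M\ge\mm'_1\}$ is again $\mathrm{Exp}(\lambda)$ and conditional variances are translation invariant, this gives
\[
J^{d,N+1}=(1-q)\,V(\mm'_1)+q\,J^{d,N}.
\]
Every bin of the $N$-bin equilibrium has length exceeding $\mm'_1$ (the finite ones because bin lengths increase — Proposition~\ref{prop1} for $b<0$, Theorem~\ref{thm:expPosBias} for $b>0$ — and they coincide with the non-initial bins of the $(N{+}1)$-bin equilibrium; the last one trivially), so each contributes a conditional variance $>V(\mm'_1)$, the last one being $\tfrac1{\lambda^2}$; hence $J^{d,N}>V(\mm'_1)$, and substituting yields $J^{d,N+1}<J^{d,N}$. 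Since reading the recursion shows that a coexisting $K$-bin and $N$-bin equilibrium force all intermediate bin counts to be realizable, chaining this strict inequality proves part~(i).

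For part~(ii), Theorem~\ref{thm:expInfinite} gives that the equilibrium with infinitely many bins has all bins of the common finite length $l^*$, hence all its conditional variances equal $V(l^*)$; peeling off its first bin returns the same equilibrium, so the displayed identity degenerates to $J^{d,\infty}=(1-q^*)V(l^*)+q^*J^{d,\infty}$ with $q^*=\me^{-\lambda l^*}\in(0,1)$, i.e.\ $J^{d,\infty}=V(l^*)<\tfrac1{\lambda^2}$. On the other hand, Theorem~\ref{thm:expInfinite} also shows that in \emph{any} finite-bin equilibrium every finite bin has length $\ge l^*$, hence variance $\ge V(l^*)$, while the last bin has variance $\tfrac1{\lambda^2}>V(l^*)$; therefore $J^{d,N}>V(l^*)=J^{d,\infty}$ for every finite $N$, which combined with part~(i) shows that the equilibria with infinitely many bins are the most informative ones. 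The step I expect to be the crux is the peeling identity for $J^{d,N+1}$: one must verify carefully that the tail of the $(N{+}1)$-bin equilibrium partition is a genuine translate of the $N$-bin equilibrium partition — exactly what the recursion \eqref{eq:compactRecursionLast}--\eqref{eq:compactRecursion} and injectivity of $g$ provide — and that conditioning the source on $\{M\ge\mm'_1\}$ reproduces the same cheap-talk model (same $\lambda$, same $b$), so that the conditional decoder cost over the tail is precisely $J^{d,N}$; once this and the monotonicity of $V$ are in hand, the remaining comparisons are elementary.
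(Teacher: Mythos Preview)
Your proposal is correct and follows essentially the same route as the paper: both arguments hinge on the nesting of equilibria (the $(N{+}1)$-bin equilibrium is the $N$-bin equilibrium with one extra bin prepended at the origin, via the recursion \eqref{eq:compactRecursionLast}--\eqref{eq:compactRecursion}), the memorylessness-based peeling identity $J^{d,N+1}=(1-\me^{-\lambda \widetilde l_1})V(\widetilde l_1)+\me^{-\lambda \widetilde l_1}J^{d,N}$, and the monotonicity of the truncated variance $V(l)$. Your power-series argument for the monotonicity of $V$ is a cleaner alternative to the paper's derivative computation for $\varphi(s)=l^2/(\me^{\lambda l}+\me^{-\lambda l}-2)$, and your explicit justification of the nesting via injectivity of $g$ is a useful clarification; the only redundancy is invoking part~(i) at the end of part~(ii), since $J^{d,N}>V(l^*)=J^{d,\infty}$ already suffices.
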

\begin{proof}
	\begin{enumerate}[(i)]
	\item Suppose that there exists an equilibrium with $N$ bins, and the corresponding bin-lengths are $l_1<l_2<\ldots<l_N=\infty$ with bin-edges $0=\mm_0<\mm_1<\ldots<\mm_{N-1}<\mm_N=\infty$. Then, the decoder cost is
	\begin{align}
		J^{d,N} &= \mathbb{E}[(M-U)^2] = \mathbb{E}[(M-\mathbb{E}[M|X])^2] \nn\\
		&= \sum_{i=1}^{N} \mathbb{E}\left[\left(M-\mathbb{E}[M|\mm_{i-1}<M<\mm_i]\right)^2|\mm_{i-1}<M<\mm_i\right] \mathrm{Pr}(\mm_{i-1}<M<\mm_i) \nn\\
		&= \sum_{i=1}^{N} \mathrm{Var}\left(M|\mm_{i-1}<M<\mm_i\right) \mathrm{Pr}(\mm_{i-1}<M<\mm_i) \nn\\
		&= \sum_{i=1}^{N} \left({1\over\lambda^2} - {l_i^2\over\me^{\lambda l_i}+\me^{-\lambda l_i}-2}\right) \left(\me^{-\lambda \mm_{i-1}}\left(1-\me^{-\lambda l_i}\right)\right) \,.
	\end{align}
	Now, consider an equilibrium with $N+1$ bins with bin-lengths $\widetilde{l}_1<\widetilde{l}_2<\ldots<\widetilde{l}_{N+1}=\infty$ and bin-edges $0=\widetilde{m}_0<\widetilde{m}_1<\ldots<\widetilde{m}_{N}<\widetilde{m}_{N+1}=\infty$. The relation between bin-lengths and bin-edges can be expressed as $l_k=\widetilde{l}_{k+1}$ and $\mm_k=\widetilde{m}_{k+1}-\widetilde{l}_1$, respectively, for $k=1,2,\ldots,N$ by Theorem~\ref{thm:expPosBias}. Then, the decoder cost at the equilibrium with $N+1$ bins can be written as
	\begin{align}
		J^{d,N+1} &=  \sum_{i=1}^{N+1} \left({1\over\lambda^2} - {\widetilde{l}_i^2\over\me^{\lambda \widetilde{l}_i}+\me^{-\lambda \widetilde{l}_i}-2}\right) \left(\me^{-\lambda \widetilde{m}_{i-1}}\left(1-\me^{-\lambda \widetilde{l}_i}\right)\right) \nn\\
		&= \left({1\over\lambda^2} - {\widetilde{l}_1^2\over\me^{\lambda \widetilde{l}_1}+\me^{-\lambda \widetilde{l}_1}-2}\right) \left(\me^{-\lambda \widetilde{m}_0}\left(1-\me^{-\lambda \widetilde{l}_1}\right)\right) + \sum_{i=2}^{N+1} \left({1\over\lambda^2} - {\widetilde{l}_i^2\over\me^{\lambda \widetilde{l}_i}+\me^{-\lambda \widetilde{l}_i}-2}\right) \left(\me^{-\lambda \widetilde{m}_{i-1}}\left(1-\me^{-\lambda \widetilde{l}_i}\right)\right) \nn\\
		&= \left({1\over\lambda^2} - {\widetilde{l}_1^2\over\me^{\lambda \widetilde{l}_1}+\me^{-\lambda \widetilde{l}_1}-2}\right) \left(1-\me^{-\lambda \widetilde{l}_1}\right) + \sum_{i=2}^{N+1} \left({1\over\lambda^2} - {l_{i-1}^2\over\me^{\lambda l_{i-1}}+\me^{-\lambda l_{i-1}}-2}\right) \left(\me^{-\lambda (\mm_{i-2}+\widetilde{l}_1)}\left(1-\me^{-\lambda l_{i-1}}\right)\right) \nn\\
		&= \left({1\over\lambda^2} - {\widetilde{l}_1^2\over\me^{\lambda \widetilde{l}_1}+\me^{-\lambda \widetilde{l}_1}-2}\right) \left(1-\me^{-\lambda \widetilde{l}_1}\right) + \me^{-\lambda \widetilde{l}_1}\underbrace{\left(\sum_{i=1}^{N} \left({1\over\lambda^2} - {l_{i}^2\over\me^{\lambda l_{i}}+\me^{-\lambda l_{i}}-2}\right) \left(\me^{-\lambda \mm_{i-1}}\left(1-\me^{-\lambda l_{i}}\right)\right)\right)}_{J^{d,N}} \nn\\
		&\overset{(a)}{<} J^{d,N} \left(1-\me^{-\lambda \widetilde{l}_1}\right) + J^{d,N} \me^{-\lambda \widetilde{l}_1} = J^{d,N} \,.
		\label{eq:expInformativeIneq}
	\end{align}
	Thus, $J^{d,N+1}<J^{d,N}$ is obtained, which implies that the equilibrium with more bins is more informative. Here, (a) follows from the fact below:
	\begin{align}
		J^{d,N} &= \sum_{i=1}^{N} \left({1\over\lambda^2} - {l_i^2\over\me^{\lambda l_i}+\me^{-\lambda l_i}-2}\right) \left(\me^{-\lambda \mm_{i-1}}\left(1-\me^{-\lambda l_i}\right)\right) \nn\\
		&> \sum_{i=1}^{N} \left({1\over\lambda^2} - {\widetilde{l}_1^2\over\me^{\lambda \widetilde{l}_1}+\me^{-\lambda \widetilde{l}_1}-2}\right) \mathrm{Pr}(\mm_{i-1}<m<\mm_i) \nn\\ 
		&= \left({1\over\lambda^2} - {\widetilde{l}_1^2\over\me^{\lambda \widetilde{l}_1}+\me^{-\lambda \widetilde{l}_1}-2}\right) \sum_{i=1}^{N} \mathrm{Pr}(\mm_{i-1}<m<\mm_i) \nn\\
		&= \left({1\over\lambda^2} - {\widetilde{l}_1^2\over\me^{\lambda \widetilde{l}_1}+\me^{-\lambda \widetilde{l}_1}-2}\right) \,,
	\end{align}
	where the inequality holds since $\widetilde{l}_1<l_1<l_2<\ldots<l_N$ and $\varphi(s)\triangleq {s^2\over\me^{\lambda s}+\me^{-\lambda s}-2}$ is a decreasing function of $s$, as shown below:
	\begin{align}
		\varphi(s)&= {s^2\over\me^{\lambda s}+\me^{-\lambda s}-2} = {s^2\me^{\lambda s}\over(\me^{\lambda s}-1)^2} \,,\nn\\
		\varphi^\prime(s)&= {s\me^{\lambda s}(\me^{\lambda s}-1)(2\me^{\lambda s}-\lambda s\me^{\lambda s}-\lambda s-2)\over(\me^{\lambda s}-1)^4} = {s\me^{\lambda s}(\me^{\lambda s}-1)\varrho(\lambda s)\over(\me^{\lambda s}-1)^4} \overset{(a)}{<} 0 \,,
	\end{align}
	where (a) follows from \eqref{eq:rhoNegative}. 
	
	\item Now consider an equilibrium with infinitely many bins. By Theorem~\ref{thm:expInfinite}, bin-lengths are $l_1=l_2=\ldots=l^*$, where $l^*$ is the fixed-point solution of the recursion in \eqref{eq:compactRecursion}; i.e., $g(l^*)={2\over\lambda}+2b-h(l^*)$, and bin-edges are $\mm_k = k l^*$. Then, the decoder cost is
	\begin{align}
		J^{d,\infty} &=  \sum_{i=1}^{\infty} \left({1\over\lambda^2} - {(l^*)^2\over\me^{\lambda l^*}+\me^{-\lambda l^*}-2}\right) \left(\me^{-\lambda (i-1) l^*}\left(1-\me^{-\lambda l^*}\right)\right) \nn\\
		&= \left({1\over\lambda^2} - {(l^*)^2\over\me^{\lambda l^*}+\me^{-\lambda l^*}-2}\right) \left(1-\me^{-\lambda l^*}\right) +\sum_{i=2}^{\infty} \left({1\over\lambda^2} - {(l^*)^2\over\me^{\lambda l^*}+\me^{-\lambda l^*}-2}\right) \left(\me^{-\lambda (i-1) l^*}\left(1-\me^{-\lambda l^*}\right)\right) \nn\\
		&= \left({1\over\lambda^2} - {(l^*)^2\over\me^{\lambda l^*}+\me^{-\lambda l^*}-2}\right) \left(1-\me^{-\lambda l^*}\right) +\me^{-\lambda l^*}\sum_{i=2}^{\infty} \left({1\over\lambda^2} - {(l^*)^2\over\me^{\lambda l^*}+\me^{-\lambda l^*}-2}\right) \left(\me^{-\lambda (i-2) l^*}\left(1-\me^{-\lambda l^*}\right)\right) \nn\\
		&= \left({1\over\lambda^2} - {(l^*)^2\over\me^{\lambda l^*}+\me^{-\lambda l^*}-2}\right) \left(1-\me^{-\lambda l^*}\right) +\me^{-\lambda l^*}\underbrace{\sum_{i=1}^{\infty} \left({1\over\lambda^2} - {(l^*)^2\over\me^{\lambda l^*}+\me^{-\lambda l^*}-2}\right) \left(\me^{-\lambda (i-1) l^*}\left(1-\me^{-\lambda l^*}\right)\right)}_{J^{d,\infty}} \nn\\
		\Rightarrow & J^{d,\infty}\left(1-\me^{-\lambda l^*}\right) = \left({1\over\lambda^2} - {(l^*)^2\over\me^{\lambda l^*}+\me^{-\lambda l^*}-2}\right) \left(1-\me^{-\lambda l^*}\right) \nn\\
		\Rightarrow & J^{d,\infty} =  \left({1\over\lambda^2} - {(l^*)^2\over\me^{\lambda l^*}+\me^{-\lambda l^*}-2}\right) \,.
	\end{align}
	Since bin-lengths at the equilibria with finitely many bins are greater than $l^*$ by Theorem~\ref{thm:expInfinite}, and due to a similar reasoning in \eqref{eq:expInformativeIneq} (indeed, by replacing $\widetilde{l}_1$ with $l^*$), $J^{d,\infty}<J^{d,N}$ can be obtained for any finite $N$. Actually, $J^{d,N}$ is a monotonically decreasing sequence with limit $\lim_{N\to\infty} J^{d,N} = J^{d,\infty}$. Thus, the lowest equilibrium cost is achieved with infinitely many bins.
	\end{enumerate}
\end{proof}

Theorem~\ref{thm:expMoreInformative} implies that, among multiple equilibria, the one with the maximum number of bins must be chosen by the players; i.e., the payoff dominant equilibrium is selected \cite{eqSelectionBook}.

\section{Gaussian Distribution}

Let $M$ be a Gaussian r.v. with mean $\mu$ and variance $\sigma^2$; i.e., $M\sim\mathcal{N}(\mu,\sigma^2)$. Let $\phi(m)={1\over\sqrt{2\pi}}\me^{-{m^2\over2}}$ be the PDF of a standard Gaussian r.v., and let $\Phi(b)=\int_{-\infty}^{b}\phi(m){\rm{d}}m$ be its cumulative distribution function (CDF). Then, the expectation of a truncated Gaussian r.v. is the following:
\begin{fact}
	The mean of a Gaussian r.v. $M\sim\mathcal{N}(\mu,\sigma^2)$ truncated to the interval $[a,b]$ is $\mathbb{E}[M|a<M<b]=\mu - \sigma{\phi({b-\mu\over\sigma})-\phi({a-\mu\over\sigma})\over\Phi({b-\mu\over\sigma})-\Phi({a-\mu\over\sigma})}$.
\end{fact}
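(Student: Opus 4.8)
The plan is to reduce to the standard normal case by an affine change of variables, and then exploit the elementary identity $\phi'(z) = -z\,\phi(z)$ to evaluate the only nontrivial integral in closed form. This mirrors the style of the proof of Fact~\ref{fact:exponential}, where the key was an explicit antiderivative for $m\,\me^{-\lambda m}$; here the analogous antiderivative is simply $-\phi$.

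First I would set $Z \triangleq (M-\mu)/\sigma$, so that $Z$ is standard normal and the event $\{a < M < b\}$ coincides with $\{\alpha < Z < \beta\}$, where $\alpha \triangleq (a-\mu)/\sigma$ and $\beta \triangleq (b-\mu)/\sigma$. Since $M = \mu + \sigma Z$, linearity of conditional expectation gives $\mathbb{E}[M \mid a < M < b] = \mu + \sigma\,\mathbb{E}[Z \mid \alpha < Z < \beta]$, so it suffices to compute the conditional mean of the truncated standard normal. Writing out the truncated density, $\mathbb{E}[Z \mid \alpha < Z < \beta] = \big(\int_\alpha^\beta z\,\phi(z)\,\mathrm{d}z\big)\big/\big(\int_\alpha^\beta \phi(z)\,\mathrm{d}z\big)$, and the denominator equals $\Phi(\beta) - \Phi(\alpha)$ by the definition of $\Phi$ (this is strictly positive whenever $a < b$, so the expression is well defined).

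For the numerator, I would observe directly from $\phi(z) = \tfrac{1}{\sqrt{2\pi}}\me^{-z^2/2}$ that $\tfrac{\mathrm{d}}{\mathrm{d}z}\phi(z) = -z\,\phi(z)$, hence $\int_\alpha^\beta z\,\phi(z)\,\mathrm{d}z = -\big[\phi(z)\big]_\alpha^\beta = \phi(\alpha) - \phi(\beta)$. Combining the two pieces gives $\mathbb{E}[Z \mid \alpha < Z < \beta] = -\dfrac{\phi(\beta) - \phi(\alpha)}{\Phi(\beta) - \Phi(\alpha)}$, and substituting this back together with $\alpha = (a-\mu)/\sigma$, $\beta = (b-\mu)/\sigma$ yields the claimed formula $\mathbb{E}[M \mid a < M < b] = \mu - \sigma\,\dfrac{\phi\big(\tfrac{b-\mu}{\sigma}\big) - \phi\big(\tfrac{a-\mu}{\sigma}\big)}{\Phi\big(\tfrac{b-\mu}{\sigma}\big) - \Phi\big(\tfrac{a-\mu}{\sigma}\big)}$.

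There is no genuine obstacle here; the computation is routine. The only points meriting a line of care are the normalization of the truncated density (dividing by $\Phi(\beta)-\Phi(\alpha)$) and keeping the sign straight in the antiderivative identity $\phi' = -z\phi$. I would present the argument compactly in a couple of displayed lines.
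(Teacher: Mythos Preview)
Your proof is correct and follows essentially the same route as the paper: both standardize via $s=(m-\mu)/\sigma$ to reduce to $\mu+\sigma\int s\,\phi(s)\,\mathrm{d}s\big/\int\phi(s)\,\mathrm{d}s$, and then evaluate the numerator by recognizing an antiderivative (the paper uses the substitution $u=s^2/2$, which is just the integral form of your observation $\phi'(z)=-z\,\phi(z)$). The only difference is cosmetic---you phrase the reduction probabilistically via $Z=(M-\mu)/\sigma$ and linearity of conditional expectation, whereas the paper writes out the integrals directly---but the computations line up step for step.
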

\begin{proof}
	\begin{align}
		\mathbb{E}[M|a<M<b] &= \int_{a}^{b} m {{1\over\sqrt{2\pi}\sigma}\me^{-{(m-\mu)^2\over2\sigma^2}}\over\int_{a}^{b}{1\over\sqrt{2\pi}\sigma}\me^{-{(m-\mu)^2\over2\sigma^2}}}\mathrm{d}m = {\int_{a}^{b} m {1\over\sqrt{2\pi}\sigma}\me^{-{(m-\mu)^2\over2\sigma^2}}\mathrm{d}m\over\int_{a}^{b}{1\over\sqrt{2\pi}\sigma}\me^{-{(m-\mu)^2\over2\sigma^2}}\mathrm{d}m} \nn\\
		& \stackrel{\mathclap{\substack{{s=(m-\mu)/\sigma} \\ {\mathrm{d}s=\mathrm{d}m/\sigma}}}}{=} \hspace*{1.5em} {\int_{(a-\mu)/\sigma}^{(b-\mu)/\sigma} (\sigma s+\mu) {1\over\sqrt{2\pi}}\me^{-{s^2\over2}}\mathrm{d}s\over\int_{(a-\mu)/\sigma}^{(b-\mu)/\sigma}{1\over\sqrt{2\pi}}\me^{-{s^2\over2}}\mathrm{d}s} = \mu + \sigma{\int_{(a-\mu)/\sigma}^{(b-\mu)/\sigma} s {1\over\sqrt{2\pi}}\me^{-{s^2\over2}}\mathrm{d}s\over\int_{(a-\mu)/\sigma}^{(b-\mu)/\sigma}{1\over\sqrt{2\pi}}\me^{-{s^2\over2}}\mathrm{d}s} \nn\\
		& \stackrel{\mathclap{\substack{{u=s^2/2} \\ {\mathrm{d}u=s\mathrm{d}s}}}}{=} \hspace*{1.5em} \mu + \sigma{\int_{(a-\mu)^2/(2\sigma^2)}^{(b-\mu)^2/(2\sigma^2)}  {1\over\sqrt{2\pi}}\me^{-u}\mathrm{d}u\over\Phi({b-\mu\over\sigma})-\Phi({a-\mu\over\sigma})} = \mu + \sigma{-{1\over\sqrt{2\pi}}\me^{-u}\rvert_{u=(a-\mu)^2/(2\sigma^2)}^{u=(b-\mu)^2/(2\sigma^2)}\over\Phi({b-\mu\over\sigma})-\Phi({a-\mu\over\sigma})} \nn\\
		&= \mu + \sigma{-\phi({b-\mu\over\sigma})+\phi({a-\mu\over\sigma})\over\Phi({b-\mu\over\sigma})-\Phi({a-\mu\over\sigma})} \overset{\substack{{\alpha\triangleq(a-\mu)/\sigma} \\ {\beta\triangleq(b-\mu)/\sigma}}}{=} \mu - \sigma{\phi(\beta)-\phi(\alpha)\over\Phi(\beta)-\Phi(\alpha)} \,.
		\label{eq:gaussCentroid}
\end{align}
\end{proof}
Now we consider an equilibrium with two bins:
\begin{thm}\label{thm:gauss2bins}
	When the source has a Gaussian distribution as $M\sim\mathcal{N}(\mu,\sigma^2)$, there always exists an equilibrium with two bins regardless of the value of $b$.
\end{thm}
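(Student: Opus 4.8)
The plan is to reduce the existence of a two-bin equilibrium to solving a single scalar equation and then to invoke the intermediate value theorem. Take the candidate partition of $\mathbb{R}$ into the two bins $(-\infty,\mm_1)$ and $(\mm_1,\infty)$, parametrized by the single boundary point $\mm_1$, and set $\beta \triangleq (\mm_1-\mu)/\sigma$. Applying the preceding Fact with $a=-\infty,\,b=\mm_1$ and then with $a=\mm_1,\,b=\infty$ (so that $\phi$ and $\Phi$ vanish / saturate at the infinite endpoints), the decoder best responses from \eqref{centroid} are $u_1 = \mu - \sigma\,\phi(\beta)/\Phi(\beta)$ and $u_2 = \mu + \sigma\,\phi(\beta)/(1-\Phi(\beta))$. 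Substituting these into the encoder's boundary condition \eqref{eq:centroidBoundaryEq}, namely $\mm_1 = \tfrac{u_1+u_2}{2}+b$, and using $\mm_1 = \mu + \sigma\beta$, one finds that $(u_1,u_2,\mm_1)$ forms a two-bin equilibrium if and only if $G(\beta) = b$, where $G(\beta) \triangleq \sigma\beta - \tfrac{\sigma}{2}\,\phi(\beta)\bigl(\tfrac{1}{1-\Phi(\beta)} - \tfrac{1}{\Phi(\beta)}\bigr)$.

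It then remains to show that $G$ maps $\mathbb{R}$ onto $\mathbb{R}$. Continuity of $G$ is immediate since $\phi$ is continuous and $0 < \Phi(\beta) < 1$ for every finite $\beta$. Moreover $G$ is odd: using $\phi(-\beta)=\phi(\beta)$ and $\Phi(-\beta)=1-\Phi(\beta)$, a direct substitution gives $G(-\beta) = -G(\beta)$, so it suffices to establish $G(\beta) \to +\infty$ as $\beta \to +\infty$. This limit is the main obstacle, because the leading term $\sigma\beta$ and the subtracted inverse-Mills-ratio term $\tfrac{\sigma}{2}\,\phi(\beta)/(1-\Phi(\beta))$ both diverge, and one must confirm that the difference still blows up. I would control this with the standard Gaussian tail bound $\beta < \phi(\beta)/(1-\Phi(\beta)) < \beta + 1/\beta$ for $\beta > 0$: since the term $\tfrac{\sigma}{2}\,\phi(\beta)/\Phi(\beta)$ is positive for $\beta>0$, we get $G(\beta) > \sigma\beta - \tfrac{\sigma}{2}\bigl(\beta + \tfrac{1}{\beta}\bigr) = \tfrac{\sigma\beta}{2} - \tfrac{\sigma}{2\beta}$, which tends to $+\infty$. (Equivalently, one may note that the inverse Mills ratio is asymptotic to $\beta$, so $G(\beta) = \tfrac{\sigma\beta}{2} + o(\beta)$.)

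By oddness, $G(\beta) \to -\infty$ as $\beta \to -\infty$, so continuity and the intermediate value theorem yield, for every $b \in \mathbb{R}$, a finite $\beta^\ast$ with $G(\beta^\ast) = b$. The corresponding boundary $\mm_1 = \mu + \sigma\beta^\ast$ is finite, so under the Gaussian law both bins $(-\infty,\mm_1)$ and $(\mm_1,\infty)$ carry positive probability and the two centroids satisfy $u_1 < \mm_1 < u_2$; hence this is a genuine (informative) two-bin equilibrium, which holds for every value of $b$, completing the proof. I would remark that $G$ is in fact plausibly monotone, which would also give uniqueness, but since the statement only asserts existence, the intermediate value theorem argument is all that is needed.
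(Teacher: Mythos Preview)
Your proof is correct and follows the same overall strategy as the paper: both reduce the two-bin equilibrium to a single scalar equation in the standardized boundary (your $G(\beta)=b$ is exactly the paper's $f(c)=2b/\sigma$ up to the factor $\sigma/2$) and then show that this function is onto $\mathbb{R}$ by checking the limits at $\pm\infty$.

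The technical route differs in two respects. First, you exploit the oddness of $G$ together with the standard Mills-ratio upper bound $\phi(\beta)/(1-\Phi(\beta))<\beta+1/\beta$ to obtain $G(\beta)\to+\infty$ (and hence, by symmetry, $G(\beta)\to-\infty$); the paper instead computes each limit separately via repeated applications of l'H\^opital's rule. Your argument here is shorter and more transparent. Second, the paper goes further than existence: it proves $f'(c)>0$ for all $c$ (using a sharper Mills-ratio inequality due to Birnbaum together with some numerical estimates), thereby establishing \emph{uniqueness} of the two-bin equilibrium. You correctly note that monotonicity is plausible but not required for the stated theorem; so your use of the intermediate value theorem is entirely adequate for existence, while the paper's derivative analysis buys the additional (unstated) uniqueness conclusion.
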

\begin{proof}
	Consider the two bins $(-\infty=\mm_0,\mm_1)$ and $[\mm_1,\mm_2=\infty)$. The centroids of the bins (the action of the decoder) are $u_1=\mathbb{E}[M|-\infty<M<\mm_1]=\mu - \sigma{\phi({\mm_1-\mu\over\sigma})\over\Phi({\mm_1-\mu\over\sigma})}$ and $u_2=\mathbb{E}[M|\mm_1\leq M<\infty] = \mu + \sigma{\phi({\mm_1-\mu\over\sigma})\over1-\Phi({\mm_1-\mu\over\sigma})}$. Then, by utilizing \eqref{eq:centroidBoundaryEq}, an equilibrium with two bins exists if and only if
	\begin{align}
		\mm_1 &= {u_1+u_2\over2}+b = {\mu - \sigma{\phi({\mm_1-\mu\over\sigma})\over\Phi({\mm_1-\mu\over\sigma})}+\mu + \sigma{\phi({\mm_1-\mu\over\sigma})\over1-\Phi({\mm_1-\mu\over\sigma})}\over2}+b \nn\\
		&= \mu+{\sigma\over2}\left({\phi({\mm_1-\mu\over\sigma})\over1-\Phi({\mm_1-\mu\over\sigma})}-{\phi({\mm_1-\mu\over\sigma})\over\Phi({\mm_1-\mu\over\sigma})}\right)+b \nn\\
		& \stackrel{\mathclap{c\triangleq{\mm_1-\mu\over\sigma}}}{\Rightarrow} \hspace*{1.5em} \sigma c+\mu = \mu+{\sigma\over2} \left({\phi(c)\over 1-\Phi(c)}-{\phi(c)\over\Phi(c)}\right) + b \nn\\
		& \Rightarrow 2c - {\phi(c)\over 1-\Phi(c)} + {\phi(c)\over\Phi(c)} = {2b\over\sigma} \,.
	\label{eq:gaussTwoBinsEq}
	\end{align}
	Let $f(c)\triangleq 2c - {\phi(c)\over 1-\Phi(c)} + {\phi(c)\over\Phi(c)}$ and $\stackrel{H}{=}$ denote l'H\^ospital's rule, then, observe the following:
	\begin{align}
		\lim\limits_{c\rightarrow-\infty}f(c) &= \lim\limits_{c\rightarrow-\infty} \left(2c - {\phi(c)\over 1-\Phi(c)} + {\phi(c)\over\Phi(c)}\right)\nn\\
		&= \lim\limits_{c\rightarrow-\infty} \left(2c + {\phi(c)\over\Phi(c)}\right) - \cancelto{0}{\lim\limits_{c\rightarrow-\infty}\left({\phi(c)\over 1-\Phi(c)}\right)} \nn\\
		&= \lim\limits_{c\rightarrow-\infty} \left({2c\Phi(c)+\phi(c)\over\Phi(c)}\right) \nn\\
		&\stackrel{H}{=}  \lim\limits_{c\rightarrow-\infty} \left({2\Phi(c)+c\phi(c)\over\phi(c)}\right) \nn\\
		&\stackrel{H}{=}  \lim\limits_{c\rightarrow-\infty} \left({3\phi(c)-c^2\phi(c)\over-c\phi(c)}\right) = \lim\limits_{c\rightarrow-\infty}\left(3-c^2\over-c\right) \stackrel{H}{=} \lim\limits_{c\rightarrow-\infty} 2c\rightarrow-\infty \,,\nn\\
		\lim\limits_{c\rightarrow\infty}f(c) &= \lim\limits_{c\rightarrow\infty} \left(2c - {\phi(c)\over 1-\Phi(c)} + {\phi(c)\over\Phi(c)}\right)\nn\\
		&= \lim\limits_{c\rightarrow\infty} \left(2c - {\phi(c)\over 1-\Phi(c)}\right) + \cancelto{0}{\lim\limits_{c\rightarrow\infty}\left({\phi(c)\over\Phi(c)}\right)} \nn\\
		&= \lim\limits_{c\rightarrow\infty} \left({2c-2c\Phi(c)-\phi(c)\over1-\Phi(c)}\right) \nn\\
		&\stackrel{H}{=}  \lim\limits_{c\rightarrow\infty} \left({2-2\Phi(c)-c\phi(c)\over-\phi(c)}\right) \nn\\
		&\stackrel{H}{=}  \lim\limits_{c\rightarrow\infty} \left({-3\phi(c)+c^2\phi(c)\over c\phi(c)}\right) = \lim\limits_{c\rightarrow\infty}\left(-3+c^2\over c\right) \stackrel{H}{=} \lim\limits_{c\rightarrow\infty} 2c\rightarrow\infty \,,\nn\\
		f^\prime(c) &= 2-{\phi(c)(-c)(1-\Phi(c))-\phi(c)(-\phi(c))\over\left(1-\Phi(c)\right)^2} + {\phi(c)(-c)\Phi(c)-\phi(c)\phi(c)\over\Phi(c)^2} \nn\\
		&= 2-\phi(c)^2\left({1\over\left(1-\Phi(c)\right)^2}+{1\over\Phi(c)^2}\right)+c\phi(c)\left({1\over 1-\Phi(c)}-{1\over\Phi(c)}\right) \nn\\
		&= 2-{\phi(c)\over1-\Phi(c)}\left({\phi(c)\over 1-\Phi(c)}-c\right)-{\phi(c)^2\over\Phi(c)^2}-{c\phi(c)\over\Phi(c)} \,.
	\label{eq:gaussian2BinDerivative}
	\end{align}
	It can be seen that, by using the identities $\phi(c)=\phi(-c)$ and $\Phi(c)=1-\Phi(-c)$, $f^\prime(c)$ is an even function of $c$; i.e., $f^\prime(c)=f^\prime(-c)$. Thus, it can be assumed that $c\geq0$ for the analysis of $f^\prime(c)$. Then, observe the following inequalities:
	\begin{itemize}
		\item In \cite{birnbaumMillsRatio1942}, the inequality on the upper bound of the Mill's ratio is proved as ${\phi(c)\over1-\Phi(c)}<{\sqrt{c^2+4}+c\over2}$. Then,
		\begin{align*}
			{\phi(c)\over1-\Phi(c)}\left({\phi(c)\over 1-\Phi(c)}-c\right) &< 	{\sqrt{c^2+4}+c\over2}\left({\sqrt{c^2+4}+c\over2}-c\right) \nn\\
			&= {\sqrt{c^2+4}+c\over2}{\sqrt{c^2+4}-c\over2} =1 \,.
		\end{align*}	
		\item Since $\mathbb{E}[X|-\infty<X<c]=-{\phi(c)\over\Phi(c)}$ for standard normal distribution, ${\phi(c)\over\Phi(c)}$ is a decreasing function of $c$, and for $c>0$, ${\phi(c)\over\Phi(c)}<{\phi(0)\over\Phi(0)}=\sqrt{2\over\pi}$.
		\item Let $g(c)\triangleq{c\phi(c)\over\Phi(c)}$, then $g^\prime(c)={\phi(c)\left((1-c^2)\Phi(c)-c\phi(c)\over(\Phi(c))^2\right)}$. If we let $h(c)\triangleq(1-c^2)\Phi(c)-c\phi(c)$, then $h^\prime(c)=-2c\Phi(c)+(1-c^2)\phi(c)-\phi(c)-c\phi(c)(-c)=-2c\Phi(c)<0$ for $c>0$. Thus, $g^{\prime\prime}(c)<0$b holds, which implies that $g(c)$ is a concave function of $c$, and takes its maximum value at $g(c^*)$ which satisfies $g^\prime(c^*)=h(c^*)=0$. By solving numerically, we obtain $c^*\simeq0.9557$ and $g(c^*)\simeq0.2908$.
	\end{itemize}
	By utilizing the results above, \eqref{eq:gaussian2BinDerivative} becomes
	\begin{align}
		f^\prime(c) > 2 - 1 - {2\over\pi} - 0.2908 \simeq 0.0726>0 \,.
	\end{align}
	Thus, $f(c)$ is a monotone increasing function and it takes values between $(-\infty,\infty)$; thus, \eqref{eq:gaussTwoBinsEq} has always a unique solution to $f(c)={2b\over\sigma}$. This assures that, there always exists an equilibrium with two bins regardless of the value of $b$. Further, since $f(0) = 2\times 0 - {\phi(0)\over 1-\Phi(0)} + {\phi(0)\over\Phi(0)}=0$, the signs of $b$ and $c$ must be the same; i.e., if $b<0$, the boundary between two bins is smaller than the mean ($\mm_1<\mu$); if $b>0$, the boundary between two bins is greater than the mean ($\mm_1>\mu$). 
\end{proof}

Since the PDF of a Gaussian r.v. is symmetrical about its mean $\mu$, and monotonically decreasing in the interval $[\mu,\infty)$, the following can be obtained using a similar reasoning as in Proposition~\ref{prop1}:
\begin{prop} \label{prop:gaussMonotonicity}
	Suppose there is an equilibrium with $N$ bins for a Gaussian source $M\sim\mathcal{N}(\mu,\sigma^2)$. Then,
	\begin{enumerate} 
		\item[(i)] if $b<0$, bin-lengths are monotonically increasing and the number of bins are upper bounded in the interval $[\mu,\infty)$,
		\item[(ii)] if $b>0$, bin-lengths are monotonically decreasing and the number of bins are upper bounded in the interval $(-\infty,\mu]$.
	\end{enumerate}
\end{prop}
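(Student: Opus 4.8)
The plan is to run the argument of Proposition~\ref{prop1} almost verbatim, with the memorylessness of the exponential replaced by a log-concavity estimate for the Gaussian tail, and to deduce (ii) from (i) by symmetry. Indeed, the reflection $M\mapsto 2\mu-M$ fixes $\mathcal N(\mu,\sigma^2)$, sends a bin $[\mm_{k-1},\mm_k)$ to $[2\mu-\mm_k,2\mu-\mm_{k-1})$ (same length, reversed order), and turns the encoder cost $(m-u-b)^2$ into $(m-u+b)^2$; hence it carries every $b$-equilibrium to a $(-b)$-equilibrium while interchanging the roles of $[\mu,\infty)$ and $(-\infty,\mu]$ and of ``increasing'' and ``decreasing''. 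So it suffices to prove (i), i.e.\ the case $b<0$.

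Fix such an equilibrium, set $d_k:=u_k-\mm_{k-1}>0$ and $e_k:=\mm_k-u_k>0$, so $l_k=d_k+e_k$, and recall that \eqref{eq:centroidBoundaryEq} reads $d_{k+1}=e_k-2b$; since $b<0$ and $e_k>0$ this already gives $d_k>-2b>0$ for every $k\ge2$. Let $\mm_p$ be the smallest bin endpoint with $\mm_p\ge\mu$; then bins $p+1,\dots,N$ lie in $[\mu,\infty)$, where the Gaussian density is decreasing (strictly on each open bin), so the centroid of each such bounded bin lies strictly left of its midpoint, i.e.\ $e_k>d_k$ for $k=p+1,\dots,N-1$. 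Feeding $e_k>d_k$ and $d_{k+1}=e_k-2b$ into the same chain used in Proposition~\ref{prop1} gives $l_k\ge 2e_{k-1}-4b>l_{k-1}-4b>l_{k-1}$ for $k=p+2,\dots,N-1$, i.e.\ $l_{p+1}<l_{p+2}<\dots<l_{N-1}<l_N=\infty$: the bin-lengths lying in $[\mu,\infty)$ are monotonically increasing.

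For the upper bound on the number of those bins, first combine $d_k>-2b$ with $e_k>d_k$ to get $l_k=d_k+e_k>-4b$ for $k=p+1,\dots,N-1$. For the unbounded bin, the truncated-Gaussian mean formula gives $d_N=u_N-\mm_{N-1}=\sigma\,\psi(z)$, where $z:=(\mm_{N-1}-\mu)/\sigma$ and $\psi(z):=\phi(z)/(1-\Phi(z))-z$ is the standard-normal mean residual life, which is positive, strictly decreasing on $[0,\infty)$, and tends to $0$ as $z\to\infty$ (the Gaussian density being log-concave). If $\mm_{N-1}\le\mu$ there is nothing left to prove, so assume $\mm_{N-1}>\mu$; then $\sigma\,\psi(z)=d_N=e_{N-1}-2b>-2b$, forcing $\psi(z)>-2b/\sigma$ and hence $z<Z^\star:=\psi^{-1}(\min\{-2b/\sigma,\psi(0)\})<\infty$, i.e.\ $\mm_{N-1}<\mu+\sigma Z^\star$. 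Thus bins $p+1,\dots,N-1$ all sit inside the interval $[\mu,\mu+\sigma Z^\star)$ of length $\sigma Z^\star$ while each has length exceeding $-4b$, so there are fewer than $\sigma Z^\star/(-4b)$ of them, and the total number of bins meeting $[\mu,\infty)$ is at most $\sigma Z^\star/(-4b)+2$.

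The step that genuinely differs from Proposition~\ref{prop1}, and the one I expect to be the crux, is controlling the offset $u_N-\mm_{N-1}$ of the unbounded tail bin: for the exponential this offset is the constant $1/\lambda$, whereas for the Gaussian it is $\sigma\,\psi(z)$, which decays to $0$ along the tail. One therefore has to show it decays no faster than needed — precisely, that $\psi$ is strictly decreasing with $\psi(+\infty)=0$ — in order to turn the universal lower bound $d_N>-2b$ into an upper bound on $\mm_{N-1}$, and hence into a finite count. That monotone-residual-life property is standard for log-concave densities and should be recorded carefully before the counting argument is run; the monotonicity of bin-lengths and everything else is a direct transcription of the exponential proof, using only that $\phi$ is decreasing on $[\mu,\infty)$ (and symmetric).
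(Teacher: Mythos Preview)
Your symmetry reduction of (ii) to (i) is valid (the paper instead writes out both cases explicitly, but they are indeed mirror images), and your monotonicity argument for the bin-lengths in the decreasing-density region is exactly the paper's: both use $e_k>d_k$ there and feed this into the chain $d_{k+1}=e_k-2b$ to get $l_k>l_{k-1}-4b$.

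Where you genuinely diverge is in the finiteness bound. You cap the rightmost finite edge $\mm_{N-1}$ by inverting the mean-residual-life function $\psi$ and then divide the length of $[\mu,\mu+\sigma Z^\star)$ by the minimum bin length $-4b$. The paper instead uses the Mill's ratio inequality $\phi(c)/(1-\Phi(c))<(\sqrt{c^2+4}+c)/2$ \emph{once}, not to bound $\mm_{N-1}$ but to obtain the uniform offset estimate $u_k-\mm_{k-1}<\sigma$ for every tail bin; starting from $d_N<\sigma$ it then telescopes the equilibrium relation $d_{k+1}=e_k-2b>d_k-2b$ down to the bin containing $\mu$, picking up a $-2b$ at each step, and reads off the explicit count $N-t\le\lfloor-\sigma/(2b)\rfloor$. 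So your anticipated ``crux'' --- controlling $u_N-\mm_{N-1}$ --- is handled by the paper with a single inequality and no inversion of $\psi$; note also that the DMRL property you invoke is itself equivalent to $h(z)(h(z)-z)<1$, which is precisely Birnbaum's bound, so the two arguments rest on the same analytic fact. The paper's route is shorter and yields an explicit bound in $b$ and $\sigma$ alone; yours is correct but lands on the implicit $\sigma Z^\star/(-4b)+2$.
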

\begin{proof}
	\begin{itemize}
		\item[(i)] 	Consider an equilibrium with $N$ bins for a Gaussian source $M\sim\mathcal{N}(\mu,\sigma^2)$: the $k$-th bin is $[\mm_{k-1},\mm_k)$, and the centroid of the $k$-th bin (i..e, the corresponding action of the decoder) is $u_k=\mathbb{E}[M|\mm_{k-1}\leq M<\mm_k]$ so that $-\infty=\mm_0<u_1<\mm_1<u_2<\mm_2<\ldots<\mm_{N-2}<u_{N-1}<\mm_{N-1}<u_N<\mm_N=\infty$. Further, assume that $\mu$ is in the $t$-th bin; i.e., $\mm_{t-1}\leq \mu<\mm_t$. Due to the nearest neighbor condition (the best response of the encoder) we have $u_{k+1}-\mm_k = (\mm_k - u_k) - 2b$; and due to the centroid condition (the best response of the decoder), we have $u_k=\mathbb{E}[M|\mm_{k-1}\leq M<\mm_k]=\mu - \sigma{\phi({\mm_k-\mu\over\sigma})-\phi({\mm_{k-1}-\mu\over\sigma})\over\Phi({\mm_k-\mu\over\sigma})-\Phi({\mm_{k-1}-\mu\over\sigma})}$. Then, for any bin in $[\mu,\infty)$, since $\mm_{k}>\mu$, the following holds:
		\begin{align}
		u_{k} - \mm_{k-1} &= \mathbb{E}[M|\mm_{k-1}\leq M<\mm_{k}]- \mm_{k-1} \nn\\
		&<\mathbb{E}[M|\mm_{k-1}\leq M<\infty]- \mm_{k-1}=\mu + \sigma{\phi({\mm_{k-1}-\mu\over\sigma})\over1-\Phi({\mm_{k-1}-\mu\over\sigma})}	- \mm_{k-1} \nn\\
		& \stackrel{\mathclap{\text{(a)}}}{<} \mu + \sigma {\sqrt{\left({\mm_{k-1}-\mu\over\sigma}\right)^2+4}+{\mm_{k-1}-\mu\over\sigma}\over2}- \mm_{k-1} \nn\\
		&= {\sigma\over2}\left(\sqrt{\left({\mm_{k-1}-\mu\over\sigma}\right)^2+4}-{\mm_{k-1}-\mu\over\sigma}\right) \nn\\
		&< {\sigma\over2}\left(\sqrt{\left({\mm_{k-1}-\mu\over\sigma}\right)^2+4\left({\mm_{k-1}-\mu\over\sigma}\right)+4}-{\mm_{k-1}-\mu\over\sigma}\right) \nn\\
		&= {\sigma\over2}\left({\mm_{k-1}-\mu\over\sigma}+2-{\mm_{k-1}-\mu\over\sigma}\right) = \sigma \,.
		\label{eq:gaussRightIneq}
		\end{align}
		Here, (a) in due to an inequality on the upper bound of the Mill's ratio \cite{birnbaumMillsRatio1942}. Now, observe the following:
		\begin{align*}
		\sigma > u_{N} - \mm_{N-1} 	&= (\mm_{N-1} - u_{N-1}) - 2b\\
		&> (u_{N-1} - \mm_{N-2}) - 2b\\
		&=  (\mm_{N-2} - u_{N-2}) - 2(2b)\\
		&\vdots\\
		&> u_{t+1}-\mm_t - (N-t-1)(2b)\\
		&= \mm_t-u_t - (N-t)(2b) \\
		&> - (N-t)(2b)\,,
		\end{align*}
		where the inequalities follow from the fact that the Gaussian PDF of $M$ is monotonically decreasing on $[\mu,\infty)$. Hence, for $b<0$, $N-t<-{\sigma\over2b}$ is obtained, which implies that the number of bins in $[\mu,\infty)$ is bounded by $\big\lfloor -{\sigma\over2b} \big \rfloor$. Further, when $b<0$, the following relation holds for bin-lengths:
		\begin{align}
		l_k &= \mm_k-\mm_{k-1} = (\mm_k-u_k) + (u_k-\mm_{k-1}) > (u_k-\mm_{k-1}) + (u_k-\mm_{k-1}) \nn\\
		&= (\mm_{k-1}-u_{k-1}-2b) + (\mm_{k-1}-u_{k-1}-2b) \nn\\
		&> (\mm_{k-1}-u_{k-1}-2b) + (u_{k-1}-\mm_{k-2}-2b) \nn\\
		&= (\mm_{k-1}-u_{k-1}) + (u_{k-1}-\mm_{k-2}) -4b = \mm_{k-1}-\mm_{k-2} - 4b = l_{k-1}-4b \nn\\
		&\Rightarrow l_k > l_{k-1} \,.
		\end{align}
		Thus, the bin-lengths are monotonically increasing in the interval $[\mu,\infty)$ when $b<0$.
		\item[(ii)] Similarly, for any bin in $(-\infty,\mu]$, since $\mm_{k}<\mu$, the following holds:
		\begin{align}
		\mm_k - u_{k} &= \mm_k - \mathbb{E}[M|\mm_{k-1}<M<\mm_k]\nn\\
		&<\mm_k - \mathbb{E}[M|-\infty<M<\mm_k]= \mm_k - \mu + \sigma{\phi({\mm_{k}-\mu\over\sigma})\over\Phi({\mm_{k}-\mu\over\sigma})} \nn\\
		& \stackrel{\mathclap{\text{(a)}}}{=} \sigma{\phi({\mu-\mm_{k}\over\sigma})\over1-\Phi({\mu-\mm_{k}\over\sigma})}-\sigma{\mu-\mm_{k}\over\sigma} \nn\\
		& \stackrel{\mathclap{\text{(b)}}}{<} \sigma \left({\sqrt{\left({\mu-\mm_{k}\over\sigma}\right)^2+4}+{\mu-\mm_{k}\over\sigma}\over2}-{\mu-\mm_{k}\over\sigma}\right)  \nn\\
		&={\sigma\over2}\left({\sqrt{\left({\mu-\mm_{k}\over\sigma}\right)^2+4}-{\mu-\mm_{k}\over\sigma}}\right) \nn\\
		&< {\sigma\over2}\left(\sqrt{\left({\mu-\mm_{k}\over\sigma}\right)^2+4\left({\mu-\mm_{k}\over\sigma}\right)+4}-{\mu-\mm_{k}\over\sigma}\right) \nn\\
		&= {\sigma\over2}\left({\mu-\mm_{k}\over\sigma}+2-{\mu-\mm_{k}\over\sigma}\right) = \sigma \,.
		\label{eq:gaussLeftIneq}
		\end{align}
		Here, (a) holds since $\phi(x)=\phi(-x)$ and $\Phi(x)=1-\Phi(-x)$, and (b) follows from an inequality on the upper bound of the Mill's ratio \cite{birnbaumMillsRatio1942}. Now, observe the following:
		\begin{align*}
		\sigma > \mm_1 - u_1 &= u_2 - \mm_1 + 2b\\
		&> \mm_2 - u_2 + 2b\\
		&=  u_3 - \mm_2 + 2(2b)\\
		&\vdots\\
		&> \mm_{t-1} - u_{t-1} + (t-2)(2b)\\
		&= u_t - \mm_{t-1} + (t-1)(2b) \\
		&> (t-1)(2b)\,,
		\end{align*}
		where the inequalities follow from the fact that the Gaussian PDF of $M$ is monotonically increasing on $(-\infty,\mu]$. Hence, for $b>0$, $t-1<{\sigma\over2b}$ is obtained, which implies that the number of bins in $(-\infty,\mu]$ is bounded by $\big\lfloor {\sigma\over2b} \big \rfloor$. Further, when $b>0$, the following relation holds for bin-lengths:
		\begin{align}
		l_k &= \mm_k-\mm_{k-1} = (\mm_k-u_k) + (u_k-\mm_{k-1}) > (\mm_k-u_k) + (\mm_k-u_k) \nn\\
		&= (u_{k+1}-\mm_k+2b) + (u_{k+1}-\mm_k+2b) \nn\\
		&> (\mm_{k+1}-u_{k+1}+2b) + (u_{k+1}-\mm_k+2b) \nn\\
		&= (\mm_{k+1}-u_{k+1}) + (u_{k+1}-\mm_{k}) + 4b = \mm_{k+1}-\mm_{k} + 4b = l_{k+1}+ 4b \nn\\
		&\Rightarrow l_k < l_{k+1} \,.
		\end{align}
		Thus, the bin-lengths are monotonically decreasing in the interval $(-\infty,\mu]$ when $b>0$.
	\end{itemize}
\end{proof}

After showing that there always exists an equilibrium with two bins independent of $b$, we may ask whether there always exists an equilibrium with $N$ bins, or infinitely many bins. The following theorem answers the second part of this question:
\begin{thm}\label{thm:gaussInfinity}
	For the Gaussian source $M\sim\mathcal{N}(\mu,\sigma^2)$, there exist equilibria with infinitely many bins. 
\end{thm}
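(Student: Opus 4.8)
The plan is to mimic the exponential case (Theorem~\ref{thm:expInfinite}): exhibit an equilibrium in which \emph{all} bins have a common length, translated copies of a single ``unit'' cell, and show that such a periodic configuration actually solves the fixed-point equations \eqref{centroid}--\eqref{eq:centroidBoundaryEq}. Concretely, I would look for a bi-infinite partition $\ldots < \mm_{-1} < \mm_0 < \mm_1 < \ldots$ of $\mathbb{R}$ with $\mm_k - \mm_{k-1} = l$ for all $k \in \mathbb{Z}$; by translation invariance of the structural equations (the Gaussian density is not translation invariant, but the \emph{nearest-neighbor/centroid} equations \eqref{centroid} and \eqref{eq:centroidBoundaryEq} only involve $\mm_k$, $u_k$, $u_{k+1}$ and differences), a uniform spacing is the natural candidate. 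Writing $u_k = \mathbb{E}[M \mid \mm_{k-1} \le M < \mm_k]$, for a uniform partition with step $l$ one has, for the \emph{bi-infinite} family, $\mm_k - u_k$ and $u_{k+1} - \mm_k$ depending only on the position of the cell relative to the mean $\mu$; so the equation $\mm_k = \tfrac{u_k + u_{k+1}}{2} + b$ must be checked cell-by-cell. The cleanest route: first reduce mod symmetry to a single scalar equation.

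I would proceed as follows. First, by a shift put $\mu = 0$ and by scaling put $\sigma = 1$, so $M \sim \mathcal{N}(0,1)$ and the bias becomes $b/\sigma$; call it $b$ again. Second, I would \emph{not} try to place a bin edge symmetrically; instead, observe that the quantization-game equation $\mm_k = \tfrac{u_k+u_{k+1}}{2}+b$ can be rewritten via \eqref{eq:centroidBoundaryEq} as the ``balance'' recursion $(\mm_k - u_k) - (u_{k+1} - \mm_k) = 2b$, i.e.\ the left over-shoot minus the right under-shoot in consecutive cells is the constant $2b$. For a uniform partition $\mm_k = \theta + kl$ ($\theta$ a phase, $l$ the common length), define $r(x) := \mathbb{E}[M \mid x \le M < x+l] - x \in (0,l)$, the centroid offset of the cell starting at $x$. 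Then $u_k - \mm_{k-1} = r(\theta + (k-1)l)$ and $\mm_k - u_k = l - r(\theta + (k-1)l)$, so the recursion becomes
\begin{align}
l - r(\theta + (k-1)l) \;-\; r(\theta + kl) \;=\; 2b \qquad \text{for all } k \in \mathbb{Z}.
\label{eq:gaussUnitCell}
\end{align}
The key point is that as $x \to +\infty$, $r(x) \to 0$ (the normal density is steeply decreasing, so the centroid hugs the left edge), and as $x \to -\infty$, $r(x) \to l$ (by symmetry); moreover $r$ is continuous. Hence if I can choose $l$ and $\theta$ so that \eqref{eq:gaussUnitCell} holds for \emph{every} integer $k$, I am done.

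The construction I would use: fix the common length $l$ to be determined, and look for a \emph{single} consistent phase. Equation \eqref{eq:gaussUnitCell} says $r(\theta + kl) + r(\theta + (k-1)l) = l - 2b$ for all $k$; subtracting consecutive instances gives $r(\theta + (k+1)l) = r(\theta + (k-1)l)$, i.e.\ $r$ is required to be $2l$-periodic along the arithmetic progression — impossible unless $r$ is literally constant on the grid, which it is not for the Gaussian. This shows a \emph{globally} uniform bi-infinite partition does \emph{not} work, so the real plan must be different: I expect the theorem is proved by a \emph{limiting/compactness} argument rather than an explicit periodic solution. Specifically, by Theorem~\ref{thm:gauss2bins} (and the Crawford--Sobel-type existence of $N$-bin equilibria, \cite[Theorem 1]{SignalingGames}, invoked for each finite $N$ once one shows $N$-bin equilibria exist), one gets for each $N$ an equilibrium with $N$ bins; one then shows the innermost bin edges do not escape to $\pm\infty$ and the configurations converge (after passing to a subsequence) to a bi-infinite equilibrium with countably many bins. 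The hard part — and where I would spend most of the effort — is the tightness/no-escape estimate: ruling out that all the ``action'' of the $N$-bin equilibria collapses to a single point or runs off to infinity as $N \to \infty$, using the centroid-gap lower bound from \cite[Theorem 3.2]{tacWorkArxiv} (distances between optimal decoder actions are bounded below) together with the monotonic bin-length estimates of Proposition~\ref{prop:gaussMonotonicity} on each side of $\mu$, so that the limiting partition genuinely has infinitely many nondegenerate bins rather than finitely many. Once convergence is established, passing \eqref{centroid} and \eqref{eq:centroidBoundaryEq} to the limit (both are continuous in the bin edges on compact sets, and the Gaussian tail contributes negligibly) yields the desired bi-infinite equilibrium.
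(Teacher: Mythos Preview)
Your proposal takes a genuinely different route from the paper. The paper does not pass to a limit of finite $N$-bin equilibria; instead it argues \emph{directly} on the infinite configuration via Tychonoff's fixed-point theorem. Concretely, for $b>0$ the paper uses the bounds of Proposition~\ref{prop:gaussMonotonicity} (together with the Mill's-ratio estimates \eqref{eq:gaussRightIneq}--\eqref{eq:gaussLeftIneq}) to show that in any putative equilibrium the left-most bin-edge $\mm_1$ must lie in a fixed compact interval and every subsequent bin-length $l_i$ must lie in $[2b,\,2b+2\sigma]$. This traps the entire configuration $(\mm_1, l_2, l_3, \ldots)$ in a compact convex product set $\mathscr{K}$ (compactness by Tychonoff's theorem). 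The best-response map $\mathscr{T}$ in the paper's equation \eqref{eq:gaussFixedPoint}---sending the vector of bin-edges to the vector of averaged adjacent centroids shifted by $b$---is then checked to be continuous coordinatewise, hence continuous in the product topology, and Tychonoff's fixed-point theorem on locally convex spaces delivers a fixed point, which is exactly an equilibrium with infinitely many bins. The case $b<0$ is handled symmetrically, anchored at the right-most finite bin-edge.

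Your compactness/limiting idea is reasonable in spirit and would use the same raw ingredients (the bin-length bounds of Proposition~\ref{prop:gaussMonotonicity}), but it has a real gap: you assume that for every finite $N$ an $N$-bin equilibrium exists for the Gaussian, and you justify this by invoking \cite[Theorem~1]{SignalingGames}. That result is proved only for sources with density supported on $[0,1]$; its inductive argument relies on the boundary conditions at the finite endpoints and does not transfer automatically to sources with two-sided unbounded support. The present paper never establishes $N$-bin existence for the Gaussian beyond $N=2$ (Theorem~\ref{thm:gauss2bins}), so your limiting argument rests on an unproven premise. Even granting it, the tightness step you flag as ``the hard part'' is not carried out---though admittedly the very bin-length bounds the paper proves would be the natural tool there. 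The paper's direct fixed-point approach sidesteps both difficulties: it never needs finite-$N$ equilibria as input, and compactness is built into the product space from the start.
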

\begin{proof}
The proof requires individual analysis for the positive and the negative $b$ values. Firstly, assume a positive bias term; i.e., $b>0$. Now consider a bin on $[\mu,\infty)$; i.e., the bin is the interval $[\mm_{k-1},\mm_k)$ with $\mu\leq\mm_{k-1}<\mm_{k}$. Let $u_k\triangleq \mathbb{E}[M|\mm_{k-1}<M<\mm_{k}]$, then $u_k-\mm_{k-1} < \mathbb{E}[M|\mm_{k-1}<M<\infty] -\mm_{k-1} < \sigma$, where the second inequality follow from \eqref{eq:gaussRightIneq} in Proposition~\ref{prop:gaussMonotonicity}. Further, \eqref{eq:centroidBoundaryEq} implies $\mm_k-u_k=u_{k+1}-\mm_k+2b \Rightarrow 2b+\sigma>\mm_k-u_k > 2b$. Thus, the length of the bin, $l_k\triangleq\mm_k-\mm_{k-1}=(\mm_{k}-u_k)+(u_k-\mm_{k-1})$, is between $2b<l_k<2b+2\sigma$. Similarly, if the bin is on $(-\infty,\mu]$; i.e., $[\mm_{s-1},\mm_s)$ with $\mm_{s-1}<\mm_{s}<\mu$ and $u_s\triangleq \mathbb{E}[m|\mm_{s-1}<m<\mm_{s}]$, it holds that $m_s-u_s<m_s-\mathbb{E}[m|-\infty<m<\mm_{s}]<\sigma$, where the second inequality follow from Proposition~\ref{prop:gaussMonotonicity}. Further, \eqref{eq:centroidBoundaryEq} implies $\mm_s-u_s=u_{s+1}-\mm_s+2b \Rightarrow \mm_s-u_s > 2b$ and $u_{s+1}-\mm_s<\sigma-2b$ (indeed, if $\sigma<2b$, there can be at most one bin on $(-\infty,\mu]$, which results in an absence of a bin-edge on $(-\infty,\mu]$). Thus, the length of the bin, $l_s\triangleq\mm_s-\mm_{s-1}=(\mm_{s}-u_s)+(u_s-\mm_{s-1})$, is between $2b<l_s<2\sigma-2b$. If the bin contains mean; i.e., $\mm_{t}\leq\mu<\mm_{t+1}$, we have $\mm_{t+1}-u_{t+1}=u_{t+2}-\mm_{t+1}+2b<\sigma+2b$ and $u_{t+1}-\mm_{t}=\mm_{t}-u_t-2b<\sigma-2b$. Thus, for the corresponding bin-length, $2b<\mm_{t+1}-\mm_{t}<2\sigma$ is obtained. For the left-most bin-edge, $\mm_1$, there are two possibilities:
\begin{enumerate}[(i)]
	\item \underline{$\mm_1<\mu$} : There can be at most $\big\lfloor {\sigma\over2b} \big \rfloor$ bins with maximum length $2\sigma-2b$, and there is a bin which contains $\mu$, thus $\mm_1>\mu-\big\lfloor {\sigma\over2b} \big \rfloor (2\sigma-2b)-2\sigma$.
	\item \underline{$\mm_1\geq\mu$} : Since $u_2-\mm_1<\sigma$ by \eqref{eq:gaussRightIneq}, $\mm_1-u_1=u_2-\mm_1+2b<2b+\sigma$ is obtained. Further we have $u_1=\mathbb{E}[M|-\infty<M<\mm_1]<\mu$, which implies that $\mm_1$ has an upper bound as $\mm_1<\mu+2b+\sigma$.
\end{enumerate}
Thus, at the equilibrium, the value of the left-most bin-edge is lower and upper bounded as $\mu-\big\lfloor {\sigma\over2b} \big \rfloor (2\sigma-2b)-2\sigma\leq\mm_1\leq\mu+2b+\sigma$ (Note that, here, non-strict inequalities are preferred over the strict ones in order to obtain a bounded, convex and compact set, which will be utilized in the fixed-point theorem to show the existence of an equilibrium). Further, all bin lengths, $l_2, l_3, \ldots$ are lower and upper bounded as $2b\leq l_i\leq\max\{2b+2\sigma,2\sigma,2\sigma-2b\}=2b+2\sigma$ for $i=2,3,\ldots$. Observe that the set $\mathscr{K} \triangleq \left[\mu-\big\lfloor {\sigma\over2b} \big \rfloor (2\sigma-2b)-2\sigma, \mu+2b+\sigma\right] \times [2b, 2b+2\sigma] \times [2b, 2b+2\sigma] \times \cdots$ (where $\{\mm_1, l_2, l_3, \cdots\} \in \mathscr{K}$) is a convex and compact set by Tychonoff's theorem \cite{fixedPointBook}, and the other bin-edges can be represented by $\mm_i=\mm_1+\sum_{j=2}^{i}l_j$. Hence, another convex and compact set $\widehat{\mathscr{K}}$ can be defined such that $\{\mm_1, \mm_2, \cdots\} \in \widehat{\mathscr{K}}$. Further, at the equilibrium, the best responses of the encoder and the decoder in \eqref{centroid} and \eqref{eq:centroidBoundaryEq} can be combined to define a mapping as follows:
\begin{align}
	\mathbf{\mm}\triangleq\begin{bmatrix}
		\mm_{1} \\
		\mm_{2} \\
		\vdots \\
		\mm_{k}\\
		\vdots
	\end{bmatrix} &= 
	\begin{bmatrix}
		{\mathbb{E}[M|\mm_{0}<M<\mm_{1}]+\mathbb{E}[M|\mm_{1}<M<\mm_{2}]\over2}+b \\
		{\mathbb{E}[m|\mm_{1}<M<\mm_{2}]+\mathbb{E}[M|\mm_{2}<M<\mm_{3}]\over2}+b \\
		\vdots \\
		{\mathbb{E}[m|\mm_{k-1}<M<\mm_{k}]+\mathbb{E}[M|\mm_{k}<M<\mm_{k+1}]\over2}+b \\
		\vdots
	\end{bmatrix} \triangleq\mathscr{T}(\mathbf{\mm})\,,
	\label{eq:gaussFixedPoint}
\end{align} 
Note that the mapping $\mathscr{T}(\mathbf{\mm}):\widehat{\mathscr{K}}\rightarrow\widehat{\mathscr{K}}$ is continuous under the point-wise convergence (since, for $\lim_{n\to\infty}\mm_{i_n}=\mm_{i}$, $\lim_{n\to\infty}{\mathbb{E}[M|\mm_{i-1}<M<\mm_{i_n}]+\mathbb{E}[M|\mm_{i_n}<M<\mm_{i+1}]\over2}+b=\lim_{n\to\infty}{\mu - \sigma{\phi({\mm_{i_n}-\mu\over\sigma})-\phi({\mm_{i-1}-\mu\over\sigma})\over\Phi({\mm_{i_n}-\mu\over\sigma})-\Phi({\mm_{i-1}-\mu\over\sigma})}+\mu - \sigma{\phi({\mm_{i+1}-\mu\over\sigma})-\phi({\mm_{i_n}-\mu\over\sigma})\over\Phi({\mm_{i+1}-\mu\over\sigma})-\Phi({\mm_{i_n}-\mu\over\sigma})}\over2}+b={\mu - \sigma{\phi({\mm_{i}-\mu\over\sigma})-\phi({\mm_{i-1}-\mu\over\sigma})\over\Phi({\mm_{i}-\mu\over\sigma})-\Phi({\mm_{i-1}-\mu\over\sigma})}+\mu - \sigma{\phi({\mm_{i+1}-\mu\over\sigma})-\phi({\mm_{i}-\mu\over\sigma})\over\Phi({\mm_{i+1}-\mu\over\sigma})-\Phi({\mm_{i}-\mu\over\sigma})}\over2}+b={\mathbb{E}[M|\mm_{i-1}<M<\mm_{i}]+\mathbb{E}[M|\mm_{i}<M<\mm_{i+1}]\over2}+b$, and this analysis can be generalized to the vector case to get the desired result), and hence, under the product topology (the result follows by incrementing the dimension of the product one-by-one and showing the continuity at each step). Further, since (countably) infinite product of real intervals is a locally convex vector space, $\widehat{\mathscr{K}}$ is a bounded, convex and compact and locally convex space. Hence, there exists a fixed point for the mapping $\mathscr{T}$ such that $\mathbf{\mm^*}=\mathscr{T}(\mathbf{\mm^*})$ by Tychonoff's fixed-point theorem \cite{fixedPointBook}. This proves that there exists an equilibrium with infinitely many bins. 

	Now assume a negative bias term; i.e., $b<0$ and consider a bin on $(-\infty,\mu]$; i.e., the bin is the interval $[\mm_{k-1},\mm_k)$ with $\mm_{k-1}<\mm_{k}<\mu$. Let $u_k\triangleq \mathbb{E}[M|\mm_{k-1}<M<\mm_{k}]$, then $\mm_{k}-u_k < \mm_{k}-\mathbb{E}[M|-\infty<M<\mm_{k}] < \sigma$, where the second inequality follow from \eqref{eq:gaussLeftIneq} in Proposition~\ref{prop:gaussMonotonicity}. Further, \eqref{eq:centroidBoundaryEq} implies $u_{k+1}-\mm_k=\mm_k-u_k-2b \Rightarrow \sigma-2b>u_{k+1}-\mm_k > - 2b$. Thus, the length of the bin, $l_k\triangleq\mm_k-\mm_{k-1}=(\mm_{k}-u_k)+(u_k-\mm_{k-1})$, is between $-2b<l_k<2\sigma-2b$. Similarly, if the bin is on $[\mu,\infty)$; i.e., $[\mm_{s-1},\mm_s)$ with $\mu\leq\mm_{s-1}<\mm_{s}$ and $u_s\triangleq \mathbb{E}[M|\mm_{s-1}<M<\mm_{s}]$, it holds that $u_s-\mm_{s-1}<\mathbb{E}[M|\mm_{s-1}<M<\infty]-\mm_{s-1}<\sigma$, where the second inequality follow from Proposition~\ref{prop:gaussMonotonicity}. Further, \eqref{eq:centroidBoundaryEq} implies $u_{s+1}-\mm_s=\mm_s-u_s-2b \Rightarrow u_{s+1}-\mm_s>-2b$ and $\mm_s-u_s < \sigma + 2b$ (indeed, if $\sigma<-2b$, there can be at most one bin on $[\mu,\infty)$, which results in an absence of a bin-edge on $[\mu,\infty)$). Thus, the length of the bin, $l_s\triangleq\mm_s-\mm_{s-1}=(\mm_{s}-u_s)+(u_s-\mm_{s-1})$, is between $-2b<l_s<2\sigma+2b$. If the bin contains mean; i.e., $\mm_{t}\leq\mu<\mm_{t+1}$, we have $\mm_{t+1}-u_{t+1}=u_{t+2}-\mm_{t+1}+2b<\sigma+2b$ and $u_{t+1}-\mm_{t}=\mm_{t}-u_t-2b<\sigma-2b$. Thus, for the corresponding bin-length, $-2b<\mm_{t+1}-\mm_{t}<2\sigma$ is obtained. For the right-most bin-edge, $\mm_r$, there are two possibilities:
	\begin{enumerate}[(i)]
		\item \underline{$\mm_r\geq\mu$} : There can be at most $\big\lfloor -{\sigma\over2b} \big \rfloor$ bins with maximum length $2\sigma+2b$, and there is a bin which contains $\mu$, thus $\mm_r<\mu-\big\lfloor {\sigma\over2b} \big \rfloor (2\sigma+2b)+2\sigma$.
		\item \underline{$\mm_r<\mu$} : Since $\mm_r-u_r<\sigma$ by \eqref{eq:gaussLeftIneq}, $u_{r+1}-\mm_r=\mm_r-u_r-2b<\sigma-2b$ is obtained. Further we have $u_{r+1}=\mathbb{E}[M|\mm_r<M<\infty]>\mu$, which implies that $\mm_r$ has a lower bound as $\mm_r>\mu+2b-\sigma$.
	\end{enumerate}
	Thus, at the equilibrium, the value of the right-most bin-edge is lower and upper bounded as $\mu+2b-\sigma\leq\mm_r\leq\mu-\big\lfloor {\sigma\over2b} \big \rfloor (2\sigma+2b)+2\sigma$ (Note that, here, non-strict inequalities are preferred over the strict ones in order to obtain a bounded, convex and compact set, which will be utilized in the fixed-point theorem to show the existence of an equilibrium). Further, all bin lengths, $l_{r-1}, l_{r-2}, \ldots$ are lower and upper bounded as $-2b\leq l_i\leq\max\{2\sigma-2b, 2\sigma, 2\sigma+2b\}=2\sigma-2b$ for $i=r-1,r-2,\ldots$. Based on the right-most bin-edge and the bin-lengths, the other bin-edges can be represented by $\mm_{r-i}=\mm_r-\sum_{j=1}^{i}l_{r-j}$. Similar to the previous case, the set $\{\mm_r, l_{r-1}, l_{r-2}, \cdots\} \in \left[\mu+2b-\sigma,\mu-\big\lfloor {\sigma\over2b} \big \rfloor (2\sigma+2b)+2\sigma\right] \times [-2b, 2\sigma-2b] \times [-2b, 2\sigma-2b] \times \cdots$ is a convex and compact set by Tychonoff's theorem \cite{fixedPointBook}. After defining a mapping similar to that in \eqref{eq:gaussFixedPoint}, which is continuous under the point-wise convergence, and hence, under the product topology. Then, there exists a fixed point by Tychonoff's fixed-point theorem \cite{fixedPointBook}, which implies the existence of an equilibrium with infinitely many bins. 
\end{proof}

\begin{rem}
At the equilibrium with infinitely many bins, as the bin-edges get very large in absolute value (i.e., $m_i\rightarrow\infty$ for $b>0$ and $m_i\rightarrow-\infty$ for $b<0$), bin-lengths converge to $2|b|$.
\end{rem}
\begin{proof}
For $b>0$, we can characterize what the bins looks like as the bin-edges get very large with the following analysis:
\begin{align}
\lim_{i\to\infty} &\mathbb{E}[M|\mm_{i-1}^*<M<\mm_{i}^*] - \mm_{i-1}^* = \lim_{i\to\infty} \mathbb{E}[M|\mm_{i-1}^*<M<\mm_{i-1}^*+l_i^*] - \mm_{i-1}^*\nn\\
=& \lim_{\mm_{i-1}^*\to\infty} \mu - \sigma{\phi({\mm_{i-1}^*+l_i^*-\mu\over\sigma})-\phi({\mm_{i-1}^*-\mu\over\sigma})\over\Phi({\mm_{i-1}^*+l_i^*-\mu\over\sigma})-\Phi({\mm_{i-1}^*-\mu\over\sigma})} - \mm_{i-1}^* \nn\\
\stackrel{H}{=} &\lim_{\mm_{i-1}^*\to\infty} \mu-\mm_{i-1}^* - \sigma {\phi({\mm_{i-1}^*+l_i^*-\mu\over\sigma}){-\mm_{i-1}^*-l_i^*+\mu\over\sigma}{1\over\sigma}-\phi({\mm_{i-1}^*-\mu\over\sigma}){-\mm_{i-1}^*+\mu\over\sigma}{1\over\sigma}\over\phi({\mm_{i-1}^*+l_i^*-\mu\over\sigma}){1\over\sigma}-\phi({\mm_{i-1}^*-\mu\over\sigma}){1\over\sigma}} \nn\\
=& \lim_{\mm_{i-1}^*\to\infty} \mu-\mm_{i-1}^* - {(-\mm_{i-1}^*-l_i^*+\mu)\phi({\mm_{i-1}^*+l_i^*-\mu\over\sigma})-(-\mm_{i-1}^*+\mu)\phi({\mm_{i-1}^*-\mu\over\sigma})\over\phi({\mm_{i-1}^*+l_i^*-\mu\over\sigma})-\phi({\mm_{i-1}^*-\mu\over\sigma})} \nn\\
=& \lim_{\mm_{i-1}^*\to\infty} \mu-\mm_{i-1}^* - \left(-\mm_{i-1}^*+\mu-{l_i^*\phi({\mm_{i-1}^*+l_i^*-\mu\over\sigma})\over\phi({\mm_{i-1}^*+l_i^*-\mu\over\sigma})-\phi({\mm_{i-1}^*-\mu\over\sigma})}\right) \nn\\
=& \lim_{\mm_{i-1}^*\to\infty} \mu-\mm_{i-1}^* + \mm_{i-1}^* - \mu + {l_i^*\over 1 - {\phi({\mm_{i-1}^*-\mu\over\sigma})\over\phi({\mm_{i-1}^*+l_i^*-\mu\over\sigma})}} \stackrel{(a)}{\rightarrow} 0
\label{eq:gaussRightInfLength}
\end{align}	 
Here, (a) follows from  $\lim_{\mm_{i-1}^*\to\infty}{\phi({\mm_{i-1}^*-\mu\over\sigma})\over\phi({\mm_{i-1}^*+l_i^*-\mu\over\sigma})}=\lim_{\mm_{i-1}^*\to\infty}\me^{{-({\mm_{i-1}^*-\mu\over\sigma})^2+({\mm_{i-1}^*+l_i^*-\mu\over\sigma})^2\over2}}\rightarrow\infty$. Then, \eqref{eq:centroidBoundaryEq} reduces to 
\begin{align}
\lim_{i\to\infty} \mm_i-&\mathbb{E}[M|\mm_{i-1}<M<\mm_{i}]= \lim_{i\to\infty} \mathbb{E}[M|\mm_{i}<M<\mm_{i+1}]-\mm_i+2b \nn\\ \Rightarrow&\lim_{i\to\infty} \mm_i-\mm_{i-1}=\lim_{i\to\infty}\mm_i-\mm_i+2b \nn\\
\Rightarrow& \lim_{i\to\infty} \mm_i-\mm_{i-1}=2b \,.
\end{align}
In other words, the distance between the centroid and the lower edge of the bin converges to zero (i.e., the centroid of the bin converges to the left-edge), and length of the bins converge to $2b$. 

Similarly, for $b<0$, we can characterize what the bins looks like as the bin-edges get very large (in absolute value) with the following analysis:
\begin{align}
\lim_{i\to\infty} & \mm_{r-i}^* - \mathbb{E}[M|\mm_{r-i-1}^*<M<\mm_{r-i}^*]  = \lim_{i\to\infty} \mm_{r-i}^* - \mathbb{E}[M|\mm_{r-i}^*-l_{r-i}^*<M<\mm_{r-i}^*] \nn\\
=& \lim_{\mm_{r-i}^*\to-\infty} \mm_{r-i}^* - \mu + \sigma{\phi({\mm_{r-i}^*-\mu\over\sigma})-\phi({\mm_{r-i}^*-l_{r-i}^*-\mu\over\sigma})\over\Phi({\mm_{r-i}^*-\mu\over\sigma})-\Phi({\mm_{r-i}^*-l_{r-i}^*-\mu\over\sigma})} \nn\\
\stackrel{H}{=} &\lim_{\mm_{r-i}^*\to-\infty} \mm_{r-i}^* - \mu + \sigma {\phi({\mm_{r-i}^*-\mu\over\sigma}){-\mm_{r-i}^*+\mu\over\sigma}{1\over\sigma}-\phi({\mm_{r-i}^*-l_{r-i}^*-\mu\over\sigma}){-\mm_{r-i}^*+l_{r-i}^*+\mu\over\sigma}{1\over\sigma}\over\phi({\mm_{r-i}^*-\mu\over\sigma}){1\over\sigma}-\phi({\mm_{r-i}^*-l_{r-i}^*-\mu\over\sigma}){1\over\sigma}} \nn\\
=& \lim_{\mm_{r-i}^*\to-\infty} \mm_{r-i}^* - \mu + {(-\mm_{r-i}^*+\mu)\phi({\mm_{r-i}^*-\mu\over\sigma})-(-\mm_{r-i}^*+l_{r-i}^*+\mu)\phi({\mm_{r-i}^*-l_{r-i}^*-\mu\over\sigma})\over\phi({\mm_{r-i}^*-\mu\over\sigma})-\phi({\mm_{r-i}^*-l_{r-i}^*-\mu\over\sigma})} \nn\\
=& \lim_{\mm_{r-i}^*\to-\infty} \mm_{r-i}^* - \mu + \left(-\mm_{r-i}^*+\mu-{l_{r-i}^*\phi({\mm_{r-i}^*-l_{r-i}^*-\mu\over\sigma})\over\phi({\mm_{r-i}^*-\mu\over\sigma})-\phi({\mm_{r-i}^*-l_{r-i}^*-\mu\over\sigma})}\right) \nn\\
=& \lim_{\mm_{r-i}^*\to-\infty} \mm_{r-i}^* - \mu - \mm_{r-i}^* + \mu - {l_i^*\over  {\phi({\mm_{r-i}^*-\mu\over\sigma})\over\phi({\mm_{r-i}^*-l_{r-i}^*-\mu\over\sigma})}-1} \stackrel{(a)}{\rightarrow} 0 \,.
\end{align}	 
Here, (a) follows from  $\lim_{\mm_{r-i}^*\to-\infty}{\phi({\mm_{r-i}^*-\mu\over\sigma})\over\phi({\mm_{r-i}^*-l_{r-i}^*-\mu\over\sigma})}=\lim_{\mm_{r-i}^*\to\infty}\me^{{-({\mm_{r-i}^*-\mu\over\sigma})^2+({\mm_{r-i}^*-l_{r-i}^*-\mu\over\sigma})^2\over2}}\rightarrow\infty$. Similar to the $b>0$ case, the distance between the centroid and the upper edge of the bin converges to zero (i.e., the centroid of the bin converges to the right-edge), and length of the bins converge to $-2b$. 
\end{proof}
	
\section{Concluding Remarks}
In this paper, the Nash equilibrium of cheap talk has been characterized for exponential and Gaussian sources. For exponential sources, it has been shown that the number of bins is bounded for bias $b<0$, whereas there exist equilibria with infinitely many bins for $b>0$. Further, it has been proved that, as the number of bins increases, the equilibrium cost of the encoder and decoder reduces. For Gaussian sources, there always exists an equilibrium with infinitely many bins.

Future work includes extending the analysis to arbitrary sources with semi-infinite support and two-sided infinite support, and the investigation of upper bounds on the number of bins and the relation between the number of bins and the equilibrium costs of the players (i.e., equilibrium selection problem). Further, the existence and convergence of equilibria; i.e., under what conditions the best responses of the encoder and the decoder match each other, can be analyzed. It is also interesting to analyze the performance loss due to the misalignment between the objective functions in order to obtain comparisons with optimal quantizers.

\section*{Acknowledgment}

Some of the results, in particular Prop.~\ref{prop1} and Thm.~\ref{thm:expPosBias} build on the project report \cite{FurrerReport} written by  Philippe Furrer, Stephen Kerner and Stanislav Fabricius.

\bibliographystyle{IEEEtran}
\bibliography{../SerkanBibliography}

\end{document}